\newtheorem{theorem}{Theorem}[section]
\newtheorem{proposition}[theorem]{Proposition}
\newtheorem{corollary}[theorem]{Corollary}
\newtheorem{remark}[theorem]{Remark}
\newtheorem{example}{Example}
\newenvironment{proof}
{\vspace*{3mm} \noindent {\sc Proof~: }\rm  }{$\emptysq$ \vspace*{3mm}}
\renewcommand{\theequation}{\thesection.\arabic{equation}} 
\def\emptysq{\mathbin{\vbox{\hrule\hbox{\vrule height1ex \kern.5em 
                         \vrule height1ex}\hrule}}}
\def\finl{\hfill\break}
\def\build#1_#2^#3{\mathrel{
\mathop{\kern 0pt#1}\limits_{#2}^{#3}}}
\def\tend_#1^#2{\mathrel{
\mathop{\kern 0pt\longrightarrow}\limits_{#1}^{#2}}}
\def\ntinf{n \rightarrow \infty}
\def\tendvers{\tend_{\ntinf}^{}}
\def\tendas{\tend_{\ntinf}^{a.s.}}
\def\tenddist{\tend_{\ntinf}^{{\cal L}}}
\def\egalas{\build{=}_{}^{{\rm a.s.}}}
\def\norm#1{\|#1 \|}
\def\normp#1{\parallel #1 \parallel}
\def\abs#1{\left| #1 \right|}
\def\acc#1{\left\{ #1 \right\}}
\def\pa#1{\left( #1 \right)}
\def\pab#1{\bigl( #1 \bigr)}
\def\paB#1{\Bigl( #1 \Bigr)}
\def\crob#1{\Bigl[ #1 \Bigr]}
\def\cro#1{\left[ #1 \right]}
\def\bkE{{\mathbb{E}}}
\def\bkF{\mbox{{\rm I\kern-.17em F}}}
\def\bkone{\mbox{{\rm 1\kern-.25em I}}}
\def\bkN{{\mathbb{N}}}
\def\bkR{{\mathbb{R}}}
\def\bkRd{{\mathbb{R}^d}}
\def\veps{\varepsilon}
\def\fchap{\widehat f_n}
\def\pchap{\widehat p_n}
\def\dt{\mbox{\rm d}t}
\def\dv{\mbox{\rm d}v}
\def\ind{\mathds{1}}
\font\calcal=cmsy10 scaled\magstep1
\def\build#1_#2^#3{\mathrel{\mathop{\kern 0pt#1}\limits_{#2}^{#3}}}
\def\liml{\build{\longrightarrow}_{\ntinf}^{{\mbox{\calcal L}}}}
\newcommand{\CMx}{<\!M(x)\!>}
\def\wunn{w_{1,n}}
\def\wdeuxn{w_{2,n}}
\def\indEi{\ind_{\acc{\left\|X_i\right\| \le v_i}}}
\def\indEic{\ind_{\acc{\left\|X_i\right\| > v_i}}}
\title{Strong uniform consistency and asymptotic normality of a kernel based error density estimator in functional autoregressive models}
\author{Nadine HILGERT\thanks{UMR 729 MISTEA, INRA SupAgro, 2 Place Viala, 34060 Montpellier Cedex, France} and Bruno PORTIER\thanks{LMI EA 3226, 
INSA de Rouen, Place Emile Blondel - BP 8, 76131 Mont-Saint-Aignan Cedex, France}}
\date{}
\begin{document}

\maketitle

\begin{abstract}
Estimating the innovation probability density is an important issue in any regression analysis.
This paper focuses on functional autoregressive models. A residual-based kernel estimator is proposed for the innovation density. Asymptotic
properties of this estimator depend on the average prediction error of the functional autoregressive function. Sufficient conditions are studied
to provide strong uniform consistency and asymptotic normality of the kernel density estimator.
%the study of a residual-based recursive kernel estimator for the innovation density of a functional autoregressive model. Asymptotic properties...
%We provide an almost sure pointwise and uniform strong law of large numbers
%as well as a pointwise and multivariate central limit theorem.
\end{abstract}

\vspace{0.2cm}

\noindent
{\bf Key words:} kernel density estimation - nonparametric residuals - functional autoregressive models - martingale approach - multivariate central limit theorem 

\vspace{0.2cm}

\noindent
{\bf 2000 Mathematics Subject Classification: } 62G07 - 62G08 - 62G20

\pagestyle{myheadings}
\thispagestyle{plain}
\markboth{B. PORTIER}{Density Estimation in Functional Autoregressive models}

%%%%%%%%%%%%%%%%%%%%%%%%%%%
\section{Introduction}
\setcounter{equation}{0} 
%%%%%%%%%%%%%%%%%%%%%%%%%%%
Dealing with regression estimation procedure gives rise to important questions concerning the a posteriori diagnostic of model assumptions. Diagnostic tools are generally based on the residuals. For example, one may have to check if the innovations are Gaussian ones. This is required in the context of variable selection or model change detection among others.
Checking such an assumption may involve estimating the innovation density 
%Analyzing the residuals is a crucial step in any regression estimation procedure. 
%It often involves estimating the innovation probability density 
and investigating the 
asymptotic convergence properties of the estimate. Kernel-based methods are among 
the most common nonparametric methods used to that purpose.
Since the pioneer works of Rosenblatt\,\cite{Rosenblatt} and Parzen\,\cite{Parzen},
a wide range of literature is available on kernel density estimation.
We refer the reader to \cite{Devroye1}, \cite{Devroye2}, \cite{Silverman}
for some interesting books on density estimation
in the context of the independent and identically distributed sample,
mixing processes, etc.
However, few papers investigate the asymptotic properties of 
a kernel density estimator (KDE for short) associated with the driven noise in  
a given regression or autoregressive model.

\bigskip

When dealing with such models, the driven noise is not observed. 
Its probability density function (pdf for short) can only be estimated through the residual error calculated from the estimation of the unknown component of the model. This one 
shall thus be estimated with an appropriate convergence rate to induce good properties to the residual error. 
A common noise density estimator is the Parzen-Rosenblatt kernel estimator, based on this residual error which is then considered as a noise predictor. 

Chai et al. \cite{Chai-Li-Tian} proved the uniform strong consistency on $\bkR$ of the noise KDE in the linear regression case. 

The linear parametric autoregressive case is for example studied in Koul \cite{Koul92} who gave weak convergence results, or Cheng \cite{Cheng05} who also showed that the asymptotic distribution of the maximum of a suitably normalized deviation of the
density estimator from the expectation of the kernel error density (based on
the true error) is the same as in the case of the one sample set up, which is 
given in Bickel and Rosenblatt \cite{Bickel}. In the nonlinear parametric autoregressive framework, Liebscher \cite{Liebscher} obtained almost sure uniform convergence of the KDE on compact sets and asymptotic normality results. Convergence rates are improved by M\"uller et al. \cite{Muller.etal} with the use of weighted kernel density estimators. Moreover, Cheng extended in \cite{Cheng2010} his results of \cite{Cheng05} in the nonlinear case. A goodness of fit test of the errors was also derived in Lee and Na \cite{Lee-Na} and Bachmann and Dette \cite{Bachmann-Dette}. Conditions on the stationarity of the time-series are given in all these references.
%In a control framework, Bercu and Portier \cite{Bercu2008} explore the asymptotic properties of a recursive KDE associated with the driven noise of a linear regression in adaptive tracking.  

The nonparametric framework has been poorly addressed up to now. It only concerns the regression case. It was first studied by Ahmad \cite{Ahmad} in a fixed design regression model. He proved pointwise and uniform almost sure convergence of the noise KDE, but without providing convergence rates. In a more general regression setting, Cheng \cite{Cheng04} gave sufficient conditions under which the density estimator based on nonparametric residuals is consistent. One of these conditions is that the estimation error of the nonlinear regression function has to be uniformly weakly consistent. Efromovich \cite{Efromovich} pointed out that the nonparametric framework for error density estimation is ``extremely complicated due to its indirect nature''. He made developments under the customary assumption that the regression function is differentiable and the error density is twice differentiable. More recently, Plancade \cite{Plancade} proposed a density estimator constructed by model selection and applied it in the nonparametric regression framework. 

\bigskip

In this paper we are interested in estimating the error density function of a functional autoregressive models of order~$1$. This framework combines the difficulties encountered both in the nonparametric regression setting and in the autoregressive setting. 
Models have the following general form
\begin{eqnarray}
\label{Model}
X_{n}\ =\ f(X_{n-1})\ +\ \veps_{n} \hskip 0.5cm (n \in \bkN),
\end{eqnarray}
where $X_n \in \bkR^d$ is observed, the function $f$ of $\bkR^d$ in $\bkR^d$ is unknown and $\veps = \pa{\veps_n}_{n\geq 0}$ is the driven noise with zero mean, positive definite covariance matrix $\Gamma$ and unknown probability density function $p$. The initial state $X_0$ is given and is independent of $\veps$. 

Since the white noise $(\varepsilon_n)_{n\geq 1}$ is not observed, 
we have to construct a predictor sequence $(\widehat\varepsilon_n)_{n\geq 1}$. 
If $f$ was known, $\tilde\varepsilon_n = X_n - f(X_{n-1})$ would be a good predictor of $\varepsilon_n$. 
However, since $f$ is unknown, we have to estimate it %by $\fchap$ 
in such a way that the residual $\widehat\varepsilon_n = X_n - \widehat f_{n-1}(X_{n-1})$ is a ``good'' predictor of $\varepsilon_n$, 
where $\widehat f_{n}$ is an estimator of $f$.
The case of functional autoregressive models provides an upper difficulty for the analysis of residuals. 
The objective of the present paper is to propose a residual-based recursive kernel estimator for the innovation density 
in that case, and to study its asymptotic properties. 

\bigskip

To estimate the unknown pdf $p$ we use a recursive version of 
the well-known Parzen-Rosenblatt kernel-based density estimator: for any $y\in\bkR^d$,
we estimate $p(y)$ by  
\begin{eqnarray}\label{defpchap}
\pchap (y) & = & \frac{1}{n} \sum_{i=1}^{n}\
i^{\alpha d} K\paB{i^\alpha\pab{X_i - \widehat f_{i-1}(X_{i-1}) -y}},
\end{eqnarray}
where $K$ is a kernel function and the bandwidth parameter $\alpha$ is 
a real number in $]0,1/d[$. 
The choice of a recursive estimator was favored to allow the use of martingale techniques
in exploring the asymptotic properties of $\pchap$. Recursive estimators have also the advantage of not requiring 
the stationarity of $(X_n)$ from the initial instant.

Under adapted regularity conditions on the density function $p$, the link between the estimation error of $\pchap$ and the errors of $\fchap$ may
be formulated as follows: for all $y \in \bkR^d$, 
\begin{equation*}
 | \widehat{p}_n(y) -p(y)| 
= O\left(\frac{1}{n}\sum_{i=0}^{n-1}{|| \widehat{f}_i(X_i) - f(X_i) ||}\right)  + o (1)\quad \mbox{a.s.}
\end{equation*}
That is,  the estimation error of $\pchap$ will always depend on the average prediction error of $\fchap$, 
whose convergence to $0$, ie.
\begin{equation}
\label{res-error-pred}
\frac{1}{n}\sum_{i=0}^{n-1}{|| \widehat{f}_i(X_i) - f(X_i) ||} = o(1) \quad \mbox{a.s.}
\end{equation}
is the major difficulty in proving the convergence of $\pchap$ to $p$.
It is clear that since the process $(X_n)$ is not bounded, this last result requires strong convergence results on $\fchap$. 
The main difficulty is then to find an estimator of $f$ that meets this requirement. 
Since no structural assumption is set on $f$, we choose to use a recursive version of the well-known 
Nadaraya-Watson kernel estimator \cite{Nadaraya,Watson}, 
studied for example by Senoussi \cite{Senoussi}, see also Duflo \cite{Duflo}. Proving (\ref{res-error-pred}) with this estimator will be the first step to achieve before studying the asymptotic properties of $\pchap$.

\bigskip

The paper is organized as follows.
The framework and the assumptions are presented in Section 2, together with the properties they induce on model (\ref{Model}).
Section 3 is dedicated to the study of the nonparametric kernel estimator of $f$: strong consistency and conditions for achieving
(\ref{res-error-pred}). Asymptotic properties of the KDE $\pchap$ (\ref{defpchap}) are studied in Section 4: 
uniform strong consistency and central limit theorem (CLT for short).
Proofs of the main results are postponed in appendix.

%\newpage

%%%%%%%%%%%%%%%%%%%%%%%%%%%%
\section{Model assumptions and properties}
\setcounter{equation}{0} 
%%%%%%%%%%%%%%%%%%%%%%%%%%%%

The following set of assumptions is common when dealing with autoregressive 
functional models (Duflo\,\cite{Duflo}). 
\vspace{0.3cm}

\noindent
{\sc Assumption {\rm [A1]}}.\ \ 
{\it Function $f$ is continuous and there are two positive constants\
$r_f < 1$ and $C_f$ such that for any $x\in \bkR^d$,}
\begin{equation} \label{MajNorm_f}
\norm{f(x)}\ \leq\ r_f \norm{x} + C_f.
\end{equation}

\noindent
{\sc Assumption {\rm [A2]}}.\ \ 
{\it The initial state $X_0$ and $\veps = \pa{\veps_n}_{n \geq 0}$ have a finite moment
of order $m > 2$.}
\vspace{1ex}

\noindent
These assumptions will ensure good stability properties of the process $(X_n)_{n \ge 0}$.
In particular, since by [A2] the noise $(\veps_n)$ has a finite moment of order $m>2$,
then $\displaystyle\varepsilon^{\sharp}_n:=\sup_{i \le n} \| \varepsilon_i \| = o\pa{n^{1/m}}$ a.s. 
and we derive from Proposition 6.2.14 of Duflo\,\cite{Duflo} that almost surely
\begin{equation} \label{MajSumetSupNormX}
\sum_{i=1}^{n} \norm{X_i}^m  = O\pa{n}
\quad\quad\mbox{and}\quad\quad
\sup_{i\leq n}\norm{X_i} = O\pa{\veps_n^\#} = o\pa{n^{1/m}}
\end{equation}
These two results will be useful in the rest of the paper. 

%%%%%%%%%%%%%%%%%%%%%%%%%%%%%%
\subsection{Strengthening Assumption [A2]}
%%%%%%%%%%%%%%%%%%%%%%%%%%%%%%

Assumption [A2] is rather standard and holds for many probability distributions. However it may 
be interesting to restrict studies to particular subfamilies of noises depending on their tail 
distribution. It is particularly useful to get more precise 
properties, as for example better convergence results or better asymptotic bounds.
Restrictions to noises with a finite exponential moment and to Gaussian noises are presented 
in this paragraph. 
\vspace{1ex}

\noindent
{\sc Assumption {\rm [A2bis]}}.\ \ 
{\it There is $m>0$ such that $\bkE\cro{\exp\pa{m\norm{X_0}}} < \infty$ and
$\bkE\cro{\exp\pa{m\norm{\veps_1}}} < \infty$.}

\vspace{1ex}

\noindent
Conditions of Proposition 6.2.15 of \cite{Duflo} are verified with 
(\ref{MajNorm_f}) and [A2bis], which implies that, for any $a < m$, almost surely
\begin{equation}\label{MajSumExpNormX}
\sum_{i=1}^{n} \exp(a\norm{X_i})\ =\ O(n)
\quad\quad\mbox{and}\quad\quad
\sup_{i\leq n}\norm{X_i} = o\pa{\log n}.
\end{equation}
%These last results will also be useful in the sequel of the paper. 

\vspace{1ex}

In the same spirit of [A2bis], 
we shall be interested on what happens with Gaussian white noises. This is the subject of 
the next Proposition, which is an adaptation of Proposition 6.2.15 of \cite{Duflo}.
%which is dedicated, as previously mentioned,  to noise with finite exponential moment.

%%%%%%%%%%%%
\begin{proposition}\label{PropMajGauss}
%%%%%%%%%%%%
Consider Model (\ref{Model}) where $(\veps_n)$ is a Gaussian white noise with
invertible covariance matrix $\Gamma$. This implies that
there is $m < 1 / 2\lambda_{\min}\pa{\Gamma}$ such that\ $\bkE\cro{\exp(m\norm{\veps_1}^2)} < \infty$.
Assume that $\bkE\cro{\exp(\norm{X_0}^2 / 2\lambda_{\min}\pa{\Gamma})} < \infty$.
Assume also that $f$ is continuous and that there is $c_f\in]0\,,\,1[$ such that
\begin{equation}\label{HypBisSur_f}
\liminf_{\|x\|\rightarrow \infty}\pa{c_f \norm{x}^2 - \norm{f(x)}^2} > \frac{c_f}{m(1 - c_f)} \log\pa{\bkE\cro{\exp(m\norm{\veps_1}^2)}}.
\end{equation}
Then,  
\begin{equation}\label{MajSupNormXGauss}
\sup_{i\leq n}\norm{X_i}\ =\ o\pa{\sqrt{\log n}}\quad\mbox{a.s.}
\end{equation}
and for any $a < (1-c_f) / 2\lambda_{\min}\pa{\Gamma}$,
\begin{equation}\label{MajSumExpX2}
\sum_{i=1}^{n} \exp(a\norm{X_i}^2)\ =\ O(n)\quad\mbox{a.s.}
\end{equation}
\end{proposition}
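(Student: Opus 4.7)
\emph{Proof plan.} I would adapt the Lyapunov approach of Duflo's Proposition 6.2.15 \cite{Duflo} to the quadratic Gaussian exponent. Set $V(x) = \exp(\beta \norm{x}^2)$ with $\beta$ to be calibrated in the allowed range for $a$, and aim to prove the one-step drift inequality
\begin{equation*}
\bkE\crob{V(X_n) \mid \mathcal{F}_{n-1}} \le \rho \, V(X_{n-1}) + K
\end{equation*}
for some $\rho < 1$, $K < \infty$. Conditionally on $\mathcal{F}_{n-1}$, $X_n = f(X_{n-1}) + \veps_n$ is Gaussian with mean $f(X_{n-1})$. Young's inequality $\norm{u+v}^2 \le (1+\eta)\norm{u}^2 + (1+\eta^{-1})\norm{v}^2$, applied to $X_n$ together with the independence of $\veps_n$ from $\mathcal{F}_{n-1}$, produces
\begin{equation*}
\bkE\crob{V(X_n) \mid \mathcal{F}_{n-1}} \le M_\eta \exp\pab{\beta(1+\eta)\norm{f(X_{n-1})}^2},
\end{equation*}
where $M_\eta := \bkE\cro{\exp(\beta(1+\eta^{-1})\norm{\veps_1}^2)}$ is finite whenever $\beta(1+\eta^{-1}) \le m$.

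The key calibration is: pick $\eta$ just below $(1-c_f)/c_f$ and $\beta = m\eta/(1+\eta)$. This choice forces $\beta(1+\eta^{-1}) = m$ (so $M_\eta = M := \bkE\cro{\exp(m\norm{\veps_1}^2)}$), $\beta(1+\eta) = m\eta$, and $(1+\eta)c_f<1$; as $\eta \uparrow (1-c_f)/c_f$ and $m \uparrow 1/(2\lambda_{\min}(\Gamma))$, $\beta = a$ sweeps through the entire allowed range. Hypothesis (\ref{HypBisSur_f}) then furnishes $R>0$ and $D > (\log M)/(m\eta)$ such that $\norm{f(x)}^2 \le c_f\norm{x}^2 - D$ on $\acc{\norm{x}\ge R}$, the required slack being available precisely because $\eta$ is strictly below $(1-c_f)/c_f$. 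Substituting yields, for $\norm{X_{n-1}} \ge R$,
\begin{equation*}
\bkE\crob{V(X_n)\mid \mathcal{F}_{n-1}} \le \pab{M e^{-m\eta D}} V(X_{n-1})^{(1+\eta)c_f},
\end{equation*}
with contracting prefactor $M e^{-m\eta D} < 1$ and sublinear exponent $(1+\eta)c_f < 1$, so the right-hand side is dominated by $\rho V(X_{n-1})$ for some $\rho \in (0,1)$ once $R$ is large enough; continuity of $f$ on $\acc{\norm{x} < R}$ absorbs the remaining region into~$K$.

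Iterating the drift inequality yields $\sup_n \bkE[V(X_n)] < \infty$ by the initial-state integrability assumption. Markov's inequality combined with Borel--Cantelli applied at radius $\sqrt{c\log n}$ for $c$ large enough then yields (\ref{MajSupNormXGauss}). For (\ref{MajSumExpX2}), I would use the Doob decomposition
\begin{equation*}
\sum_{i=1}^n V(X_i) = \sum_{i=1}^n \bkE\crob{V(X_i)\mid \mathcal{F}_{i-1}} + \mathcal{M}_n,
\end{equation*}
bound the predictable part via the drift by $\rho \sum_{i=0}^{n-1} V(X_i) + Kn$, and invoke the strong law of large numbers for martingales to conclude $\mathcal{M}_n = o(n)$ a.s., whence $(1-\rho)\sum_{i=1}^n V(X_i) \le Kn + o(n)$. \textbf{Main obstacle:} the arithmetic of the calibration step, where reaching the full range $a < (1-c_f)/(2\lambda_{\min}(\Gamma))$ exploits the strict inequality of (\ref{HypBisSur_f}) and requires tight coordination of $\eta$, $\beta$ and $D$ near their limiting values. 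A secondary delicate point is justifying the martingale strong law for $\mathcal{M}_n$, which requires additional moment control on $V(X_i)$ and may demand rerunning the drift argument with a slightly larger exponent.
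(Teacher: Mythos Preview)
Your Lyapunov approach is correct and is essentially the paper's: the paper also establishes a geometric drift for $\exp(\beta\norm{x}^2)$ via the same Young-type splitting, then cites Duflo's Proposition~6.2.12 directly for both conclusions, so your hand-derivation of (\ref{MajSupNormXGauss}) and (\ref{MajSumExpX2}) via Borel--Cantelli and the martingale SLLN is unnecessary. The one simplification you miss is that the paper takes the Young parameter at its endpoint, $c=c_f$ in $\norm{x+y}^2\le\norm{x}^2/c+\norm{y}^2/(1-c)$, which in your notation is $\eta=(1-c_f)/c_f$ and $\beta=m(1-c_f)$. At that exact calibration the noise moment $\bkE[\exp(m\norm{\veps_1}^2)]$ cancels against the constant appearing in hypothesis~(\ref{HypBisSur_f}), the exponent $(1+\eta)c_f$ on $V(X_{n-1})$ becomes~$1$, and one obtains the clean linear drift $\bkE[Z_{n+1}\mid{\cal F}_n]\le e^{-c}Z_n+C$ with the full Lyapunov exponent $m(1-c_f)$; your ``main obstacle'' then simply disappears. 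A small slip in your write-up: the claim that the slack is available ``precisely because $\eta$ is strictly below $(1-c_f)/c_f$'' is inverted---lowering $\eta$ makes the threshold $(\log M)/(m\eta)$ you need \emph{larger} than the hypothesis' constant $c_f(\log M)/(m(1-c_f))$, so the slack actually comes from the strict inequality in~(\ref{HypBisSur_f}) and survives only for $\eta$ \emph{close enough} to $(1-c_f)/c_f$.
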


\begin{proof}
Under assumption (\ref{HypBisSur_f}), for some finite constants $M > 0$
and $b > 0$, if $\norm{x} > M$, we have
\begin{equation}\label{Majf}
\norm{f(x)}^2\ \leq\ c_f \norm{x}^2 -  \frac{c_f}{m(1-c_f)} \log\pa{\bkE\cro{\exp(m\norm{\veps_1}^2)}} - b
\end{equation}
In addition as $f$ is continuous, then $\sup_{\norm{x}\leq M}\norm{f(x)} < \infty$.
Let us set $Z_n = \exp\pa{m(1 - c_f)\norm{X_n}^2}$
and ${\cal F}_n = \sigma\pa{X_0, \veps_1, \cdots, \veps_n}$.
For any $0 < c < 1$ and any $x,y\in\bkR^d$, we have $\norm{x + y}^2 \leq \norm{x}^2 / c  + \norm{y}^2 / (1 - c)$.
Thus,
\begin{equation}
\bkE\cro{Z_{n+1}\,|\,{\cal F}_n}\ \leq\ 
\bkE\cro{\exp(m\norm{\veps_1}^2)}\,
\exp\pa{\frac{m(1 - c_f)}{c_f}\norm{f(X_n)}^2}
\end{equation}
and using (\ref{Majf}), we derive that for some positive constants $c$ and $C$,
\begin{equation}
\bkE\cro{Z_{n+1}\,|\,{\cal F}_n}\ \leq\ e^{-c}\,Z_n\ +\ C
\end{equation}
Finally, applying Proposition 6.2.12 of \cite{Duflo} with the Lyapounov function 
$V(x) = \exp\pa{m(1 - c_f)\norm{x}^2}$, we obtain (\ref{MajSupNormXGauss}) and (\ref{MajSumExpX2}) for any $a < m(1 - c_f)$
and any $m < 1 / 2\lambda_{\min}\pa{\Gamma}$.
This closes the proof of Proposition \ref{PropMajGauss}.
\end{proof}

%%%%%%%%%%%%%%%%%%%%%%
\begin{remark}\label{RemBruitGaus}
%%%%%%%%%%%%%%%%%%%%%%
{\rm When $f$ satisfies [A1], we can find $c_f\!\in] r_f\,,\,1[$ such 
that assumption (\ref{HypBisSur_f}) is fulfilled.
Indeed (\ref{MajNorm_f}) implies that $\norm{f(x)}^2 \leq r_f \norm{x}^2 + C_f^2 / (1 - r_f)$. 
Hence, when $f$ satisfies [A1] and $(\veps_n)$ is Gaussian, then  
(\ref{MajSupNormXGauss}) and (\ref{MajSumExpX2}) holds for any $a \leq (1 - r_f) / 2\lambda_{\min}\pa{\Gamma}$.
In particular, we have
\begin{equation}\label{MajSumExpNormX2}
\sum_{i=1}^{n} \exp\pab{(1 - r_f)\norm{X_i}^2/ 2\lambda_{\min}\pa{\Gamma}}\ =\ O(n)
\quad\mbox{a.s.}
\end{equation}
}\end{remark}

This stability property will be useful when dealing with Gaussian noises.

%%%%%%%%%%%%%%%%%%%%%
\subsection{Asymptotic stationarity and properties}
%%%%%%%%%%%%%%%%%%%%%

A main consequence of this framework  is a property of stationarity. 
Indeed, with [A1] and by assuming that the distribution of 
$\pa{\veps_n}$ has a probability density function $p > 0$, the process $X = \pa{X_n}_{n \geq 0}$
is asymptotically stationary and possesses an invariant distribution $\mu$
which has a finite moment of order $m$ and a 
probability density function denoted\ $h$, which satisfies for any $x\in\bkRd$\,:
\begin{eqnarray}\label{Proph}
h(x) & = & \int_\bkRd\! p\pab{x - f(t)}\,h(t)\,\dt
\end{eqnarray}

Moreover, the following property holds:\
for any $\mu$-integrable function\ $g~: \bkR^d \rightarrow \bkR$ which satisfies
$\| g(x) \| \le C ( \left\|x\right\|^m +1)$ (where $C$ is a constant),  
the strong law of large numbers states that
\begin{eqnarray}\label{LGNg}
\frac{1}{n}\sum_{i=0}^{n-1}~g\pa{X_i} & \tendas & \int_\bkRd\! g(x)~{\rm d}\mu(x) .
\end{eqnarray}
Besides, for a positive constant $R$, we have
\begin{eqnarray*}
\frac{1}{n} \sum_{i=0}^{n-1}{\ind_{\acc{\|f(X_i)\| < R}}}& \geq & 1 - \frac{1}{n R} \sum_{i=0}^{n-1} \norm{f(X_i)}.
\end{eqnarray*}
Thus, applying (\ref{LGNg}) yields that, for  $R > \int_{\bkR^d} \left\|f(x)\right\| h(x) {\rm d}x$,
\begin{eqnarray}
\label{liminf-ball}
\liminf_{n \to \infty} \frac{1}{n} \sum_{i=0}^{n-1}{\ind_{\acc{\| f(X_i) \| < R}}} & \geq & 1 - \frac{\tau}{R}\ >\ 0\quad\mbox{a.s.} \; ,
\end{eqnarray}
which means that the process $(X_n)_{n \geq 0}$ infinitely often crosses the ball of radius $R$ centered on $0$. This last property will be useful for proving convergence results of $\fchap$ over dilated sets. 

%%%%%%%%%%%%%%%%%%%%%%%%%%%%%%%%%%%%%%%%
\section{Properties of the kernel estimator $\fchap$}
\label{section-fchap}
\setcounter{equation}{0}
%%%%%%%%%%%%%%%%%%%%%%%%%%%%%%%%%%%%%%%%

We shall now introduce the estimator of $f$. 
Since no structural assumption was set on function $f$, 
we chose a recursive nonparametric estimator, following the well-known Nadaraya-Watson estimator.
Let $K$ be a kernel and $\beta$ a real number in $]0, 1/ d[$.
Then, for any $x\in\bkR^d$, we estimate $f(x)$ by 
\begin{eqnarray}
\fchap (x) & = & \frac{\sum_{i=1}^{n-1}\,i^{\beta d} K\pab{i^\beta\pa{X_i -x}} X_{i+1}}
{\sum_{i=1}^{n-1}\,i^{\beta d} K\pab{i^\beta\pa{X_i -x}}}
\label{deffchap}
\end{eqnarray}
if the denominator of (\ref{deffchap}) is not equal to 0,
and by 0, otherwise.
As $\pchap$, it has a recursive form which allows the use of martingale properties. For notation ease, $\fchap$ is defined with the same 
kernel function $K$. It is of course possible to take another one, provided that it has the same characteristics.
We also point out that the denominator of (\ref{deffchap}), when divided by $n$, 
is an estimate of $h(x)$, the stationary distribution density.
\vspace{1ex}

Beyond assumptions [A1] and [A2] on $f$ and $(\veps_n)$, 
we impose the following properties to function $f$ and the pdf $p$:
\vspace{1ex}

\noindent
{\sc Assumptions {\rm [A3]}}.\ 
{\sl Function $f$ belongs to $C^1(\bkRd)$ and
its first derivatives are bounded.}
\vspace{1ex}

\noindent
{\sc Assumptions {\rm [A4]}}.\ 
{\sl The probability density function $p$ is positive and belongs to $C^1(\bkRd)$, and
$p$ and its first derivatives are bounded.}
\vspace{1ex}

Furthermore, all the proofs of the paper are based on a kernel function $K$ 
with the  characteristics given hereafter: 
\vspace{1ex}

\noindent
{\sc Assumptions {\rm [A5]}}.\ 
{\sl The kernel $K$ is a nonnegative function, Lipschitz, bounded with compact support, and integrates to 1.}

%%%%%%%%%%%%%%%%%%%%%%
\subsection{Strong uniform consistency of $\fchap$}
%%%%%%%%%%%%%%%%%%%%%%

Under Assumptions [A1]--[A5], Duflo \cite{Duflo} and Senoussi \cite{Senoussi} prove the almost sure pointwise
convergence of $\fchap$ to $f$, as well as a pointwise central limit theorem and results 
of uniform convergence on compact sets.
Since the process $(X_n)_{n \ge 0}$ is not bounded, these convergence results are not sufficient to get (\ref{res-error-pred}). 
The uniform convergence over dilated sets of $\fchap$ is necessary. 
This kind of convergence has already been established in the regression framework by Bosq \cite{Bosq}
under mixing conditions.
For model (\ref{Model}) and when $\veps$ is Gaussian, Duflo \cite{Duflo} also proves that
for $c$ sufficiently small there is $s > 0$ such that
\begin{equation*}
\sup_{\left\|x\right\| \le c\sqrt{\log n}} \left\|\fchap(x)-f(x)\right\| = o\pa{n^{-s}} 
\quad \mbox{\rm a.s.}
\end{equation*}
In a control framework, that is, when $X_n$ is submitted to the action of an exogenous variable (also called control variable),
Portier and Oulidi \cite{Portier-Oulidi} establish the same kind of result but with a more general noise. 
We will now adapt these convergence results to model (\ref{Model}) and also improve them to study the prediction errors, which has never been done before. 
For the sequel, let us denote 
$$m_n = \inf\Bigl\{p(z)\,;\,\norm{z} \leq v_n+R\Bigr\}$$
where $\pa{v_n}_{n\geq 0}$ is a sequence of positive real numbers
increasing to infinity and $R$ is a constant greater than $ \int_{\bkR^d} \left\|f(x)\right\| h(x) {\rm d}x$.
\vspace{1ex}

An additional assumption on the probability density function $p$ must be introduced to study the denominator of $\fchap$.
\vspace{1ex}

\noindent
%%%%%%%%%%%%%%%%%%
{\sc Assumption {\rm [A6]}}.\ \ \ {\sl There is a sequence of
positive real numbers\ $\pa{v_n}_{n\geq 1}$ increasing to infinity such that
\ $v_n = O\pa{n^\nu}$ with $\nu > 0$\  and 
\[m_n^{-1}\ =\ \inf\Bigl\{o\pa{n^\beta}\,,\,O\pa{n^{1 - s}}\Bigr\}\]
where\ $s\in](1+\beta d)/2\,,\,1[$ and $\beta\in]0\,,\,1/d[$}.
\vspace{1ex}

Let us mention that [A6] is not required to establish pointwise or uniform on compact sets
convergence results for $\fchap$. 
Besides, when $\beta < 1/(d+2)$, then [A6] reduces to $m_n^{-1} = o(n^\beta)$.
Indeed, in that case, we can find $s\in](1+\beta d)/2\,,\,1[$ such that $\beta = 1 - s$.

%%%%%%%%%%%%%%
\begin{theorem}\label{cvud.fchap}
%%%%%%%%%%%%%%
Let $\beta\in ]0,1/2d[$.
Assume that [A1]--[A6] hold. 
Then, we have
\begin{equation}\label{majsupfchap}
\sup_{\left\|x\right\| \le v_n} \left\|\fchap(x)-f(x)\right\| = o\left(\frac{n^{\lambda-1}}{m_n}\right) + O\left(\frac{n^{-\beta}}{m_n}
\right) \quad \mbox{\rm a.s.}
\end{equation}
for any $\lambda\in]\frac{1}{2} + \beta d\,,\,1[$.
In particular, for any $\beta < 1/ 2(d+1)$,
\begin{equation}\label{maj2supfchap}
\sup_{\left\|x\right\| \le v_n} \left\|\fchap(x)-f(x)\right\|\ =\ O\left(\frac{n^{-\beta}}{m_n}
\right) \quad \mbox{\rm a.s.}
\end{equation}
\end{theorem}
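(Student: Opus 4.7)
The plan is to use the classical Nadaraya--Watson decomposition. Writing $X_{i+1} = f(X_i) + \veps_{i+1}$ gives
\[
\fchap(x) - f(x) \;=\; \frac{B_n(x) + M_n(x)}{D_n(x)},
\]
where $D_n(x) = \tfrac{1}{n}\sum_{i=1}^{n-1} i^{\beta d} K\pab{i^\beta(X_i - x)}$, the bias piece $B_n(x) = \tfrac{1}{n}\sum_{i=1}^{n-1} i^{\beta d} K\pab{i^\beta(X_i - x)}\pab{f(X_i) - f(x)}$, and the martingale piece $M_n(x) = \tfrac{1}{n}\sum_{i=1}^{n-1} i^{\beta d} K\pab{i^\beta(X_i - x)}\,\veps_{i+1}$, which is a martingale with respect to $\mathcal{F}_i = \sigma(X_0,\veps_1,\ldots,\veps_i)$. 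I would then bound $\sup_{\|x\|\le v_n}\|B_n(x)+M_n(x)\|$ from above and $\inf_{\|x\|\le v_n}D_n(x)$ from below.

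For the denominator, the invariance identity (\ref{Proph}), with the integral restricted to $\{t : \|f(t)\|\le R\}$ (a set of positive $\mu$-measure by (\ref{liminf-ball})), gives $h(x) \ge \kappa\, m_n$ uniformly on $\{\|x\|\le v_n\}$, since $\|x-f(t)\|\le v_n+R$ there. A change of variables together with [A4] yields $\bkE\cro{D_n(x)} \to h(x)$, and a uniform martingale concentration for $D_n(x) - \bkE\cro{D_n(x)}$ (via a grid covering plus a Bernstein inequality) then produces $\inf_{\|x\|\le v_n} D_n(x) \ge \tfrac{\kappa}{2}\, m_n$ eventually almost surely. Assumption [A6] is precisely what guarantees that $m_n$ decays slowly enough for this lower bound to dominate the fluctuations in the subsequent steps.

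The bias term is the easier piece: [A3] and the compact support of $K$ imply $\|f(X_i)-f(x)\|\le L\kappa\, i^{-\beta}$ whenever $K\pab{i^\beta(X_i-x)}\neq 0$, and an Abel rearrangement against the partial sums $\sum_{i\le k} i^{\beta d} K\pab{i^\beta(X_i-x)}$, which average to $h(x)$ by the denominator analysis, produces $\sup_{\|x\|\le v_n}\|B_n(x)\| = O(n^{-\beta})$ almost surely. For the martingale, the predictable quadratic variation is of order $n^{\beta d - 1}$ after a change of variables inside the conditional expectation of $K^2$, and the martingale strong law gives $M_n(x) = o(n^{\lambda-1})$ pointwise for $\lambda > (1+\beta d)/2$. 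Uniformization over $\{\|x\|\le v_n\}$ proceeds by covering the ball with a polynomial grid, applying a Bernstein inequality to the martingale truncated at a level compatible with [A2] (the untruncated tail being swept into a Borel--Cantelli argument using (\ref{MajSumetSupNormX})), and using the Lipschitz constant of $x \mapsto K\pab{i^\beta(X_i-x)}$, which is of order $i^\beta$, to fill in between grid points.

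The main obstacle is this last uniformization. The Lipschitz constant of $M_n$ in $x$ blows up like $n^{\beta(d+1)}$, so the grid must be refined to scale roughly $n^{\lambda-1-\beta(d+1)}$ for the Lipschitz residue to be absorbed into $o(n^{\lambda-1})$; the resulting covering size, plugged into the union bound in the Bernstein step, is exactly what forces the stronger threshold $\lambda > 1/2+\beta d$ rather than the pointwise $(1+\beta d)/2$. Dividing the combined numerator bound by the denominator lower bound $\tfrac{\kappa}{2}m_n$ yields (\ref{majsupfchap}). For (\ref{maj2supfchap}), the condition $\beta < 1/2(d+1)$ makes $\lambda = 1-\beta$ admissible, so the martingale contribution is then dominated by the bias term $O(n^{-\beta}/m_n)$.
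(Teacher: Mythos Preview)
Your decomposition into bias, martingale, and denominator is exactly the paper's (Appendix~B), and the target rates on each piece match. The difference is in how the uniformization over $\{\|x\|\le v_n\}$ is carried out: the paper invokes Duflo's Theorem~6.4.34 (a packaged uniform strong law for parametrized martingales satisfying H\"older-increment bounds), both for the martingale $\Mtild_n$ directly and, through Lemma~\ref{LemmaG}, for the fluctuating parts of $\Rtild_n$ and $H_n$; you instead propose a grid covering plus Bernstein plus truncation. Both routes are valid, but the Duflo route is cleaner under [A2] (only a moment of order $m>2$), whereas making Bernstein go through with polynomial-moment noise requires the truncation you mention and care that the truncation error is eventually zero via $\veps_n^\sharp=o(n^{1/m})$. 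One mechanism to correct: with exponential concentration, a polynomial-size grid costs nothing in the exponent, so the loss from the pointwise threshold $(1+\beta d)/2$ to the uniform threshold $1/2+\beta d$ is not driven by the union bound as you state; it comes from the maximal-increment term $b_n\sim n^{\beta d}\veps_n^\sharp$ in the Bernstein denominator (and for large $m$ your route can in fact undercut the paper's threshold slightly). For the denominator, the paper does not pass through the stationary density $h$ and an unconditional expectation; it works directly with the predictable compensator $\sum_i p(x-f(X_{i-1}))$, lower-bounds it by $m_n\sum_i\ind_{\{\|f(X_{i-1})\|\le R\}}$, and applies the pathwise law (\ref{liminf-ball}). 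Your route via $h(x)\ge\kappa m_n$ from (\ref{Proph}) is fine in spirit, but the step ``$\bkE[D_n(x)]\to h(x)$'' must be read as the predictable compensator rather than the unconditional mean (otherwise you face the non-stationary start and uniformity over the dilating ball), at which point it collapses to the paper's argument.
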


\begin{proof}
The proof is postponed in Appendix B.
\end{proof}

The bound given in (\ref{maj2supfchap}) shows that the convergence rate of $ \fchap$ over dilated sets 
strongly depends on the decrease of the density function $p$ 
and on the choice of a well-suited sequence $(v_n)$. 
In particular, the way the pdf $p$ decreases at infinity has to be known to choose an 
appropriate sequence $(v_n)$. 
For example, in the Gaussian case, it is easy to see that for any $c\in]0\,,\,1[$,
\begin{equation}\label{MinDensGauss}
m_n \geq \mbox{const.}\exp\pa{-v_n^2\,/\, 2 c \lambda_{\min}\pa{\Gamma}}
\end{equation}
Therefore, choosing\ $v_n = A\pa{\log\log\,n}^{1/2}$\ with $A > 0$,
we obtain that $m_n^{-1} = O((\log n)^{A^2 \,/\,2c\lambda_{\min}\pa{\Gamma}})$.
Assumption [A6] is then satisfied and 
we derive from (\ref{majsupfchap}) that for $\beta = 1/ 2(d+1)$ and any $\lambda\in]0\,,\,1/ 2(d+1)[$, 
\begin{equation}\label{maj2supfchapgauss}
\sup_{\left\|x\right\|^2 \le A \log\!\log n} \left\|\fchap(x)-f(x)\right\|\ =\ o\left(n^{-\lambda}\right) \quad \mbox{\rm a.s.}
\end{equation}

%%%%%%%%%%%%%%%%%%%%%%%%%%%%%%%
\subsection{Average prediction  error of $\fchap$}
\label{AvPredErr.fchap}
%%%%%%%%%%%%%%%%%%%%%%%%%%%%%%%

As mentioned in introduction, we are in fact interested in the asymptotic average prediction error 
of $\fchap$. It is studied in the next Corollary.

%%%%%%%%%%
\begin{corollary}\label{CorErrPred}
%%%%%%%%%%
Assume that [A1]--[A6] hold. 
Assume also that the sequence $(m_n)$ is decreasing.
Then, for any $\beta < 1/ 2(d+1)$,
\begin{equation}\label{MajErrPred}
\frac{1}{n} \sum_{i=1}^{n} \|\widehat{f}_i(X_i) - f(X_i) \|
\ =\ O\pa{\frac{n^{-\beta}}{m_n}} +  O\pa{\frac{\wunn + \wdeuxn}{n}}\quad \mbox{\rm a.s.}
\end{equation}
where\ $\wunn = \sum_{i=1}^{n} v_i^{1-m}$\ and\ 
$\wdeuxn = \sum_{i=1}^{n} i^{1/m}\,v_i^{-m}$ if the sequence $(n^{1/m}\,v_n^{-m})$
is decreasing and $\wdeuxn =  n^{1/m}\sum_{i=1}^{n} v_i^{-m}$ otherwise.

\noindent
Moreover, if [A2bis] holds instead of  [A2], then 
$\wunn = \sum_{i=1}^{n} v_i \exp\pa{-a v_i}$ for any $a < m$\ and\ 
$\wdeuxn = \sum_{i=1}^{n} (\log i)\exp\pa{-a v_i}$ if the sequence $\pab{(\log n)e^{-a v_n}}$
is decreasing and $\wdeuxn =  (\log n)\sum_{i=1}^{n} \exp\pa{-a v_i}$ otherwise.
\end{corollary}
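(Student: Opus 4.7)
The plan is to split the sum $\sum_{i=1}^n \norm{\widehat f_i(X_i)-f(X_i)}$ along the event $\acc{\norm{X_i}\le v_i}$ and its complement, handling the first piece by the uniform bound of Theorem~\ref{cvud.fchap} on the dilated ball of radius $v_i$, and the second by brutally majorizing the integrand and using a Markov-type tail estimate to absorb the indicator $\indEic$.

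For the first piece, bound each summand by $\sup_{\norm{x}\le v_i}\norm{\widehat f_i(x)-f(x)}$, which by (\ref{maj2supfchap}) (available because $\beta<1/2(d+1)$) is $O(i^{-\beta}/m_i)$ a.s. Since $(m_n)$ is assumed decreasing, $1/m_i\le 1/m_n$ for $i\le n$; combined with $\sum_{i=1}^n i^{-\beta}=O(n^{1-\beta})$, dividing by $n$ yields the announced $O(n^{-\beta}/m_n)$ contribution.

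For the second piece, one reads off from (\ref{deffchap}) that $\widehat f_i(X_i)$ is either $0$ or a convex combination of $X_2,\ldots,X_i$, so $\norm{\widehat f_i(X_i)}\le \sup_{j\le i}\norm{X_j}=o(i^{1/m})$ a.s.\ by (\ref{MajSumetSupNormX}). Together with $\norm{f(X_i)}\le r_f\norm{X_i}+C_f$ from [A1] this gives
\begin{equation*}
\norm{\widehat f_i(X_i)-f(X_i)}\indEic
\le\bigl(o(i^{1/m})+r_f\norm{X_i}+C_f\bigr)\indEic.
\end{equation*}
On $\acc{\norm{X_i}>v_i}$ one has $\indEic\le \norm{X_i}^m v_i^{-m}$ and $\norm{X_i}\indEic\le \norm{X_i}^m v_i^{1-m}$, reducing the problem to bounding the two deterministic-weighted sums $\sum_{i=1}^n \norm{X_i}^m v_i^{1-m}$ and $\sum_{i=1}^n i^{1/m}\norm{X_i}^m v_i^{-m}$.

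It then remains to absorb the randomness via Abel summation, using the partial sums $\sum_{j=1}^i \norm{X_j}^m=O(i)$ from (\ref{MajSumetSupNormX}). Since $v_i^{1-m}$ is non-increasing, the first sum equals $O(\wunn)$. For the second, the statement splits into two cases: if $(i^{1/m}v_i^{-m})$ is non-increasing, the direct Abel transform yields $O(\sum_{i=1}^n i^{1/m}v_i^{-m})=O(\wdeuxn)$; otherwise bound $i^{1/m}\le n^{1/m}$, pull this factor out of the sum, and apply Abel to $v_i^{-m}$ alone to obtain $O(n^{1/m}\sum_{i=1}^n v_i^{-m})=O(\wdeuxn)$. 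The [A2bis] adaptation is strictly parallel, invoking (\ref{MajSumExpNormX}) in place of (\ref{MajSumetSupNormX}), replacing the polynomial tail estimate by $\indEic\le e^{-a v_i}e^{a\norm{X_i}}$, and using the sharper splitting $\norm{X_i}\indEic\le (v_i+(\norm{X_i}-v_i))e^{a(\norm{X_i}-v_i)}$ (with $xe^{ax}\le C_\delta e^{(a+\delta)x}$ for the second term) to recover the $v_i e^{-a v_i}$ factor in the definition of $\wunn$; the $\log i$ in $\wdeuxn$ then comes from $\sup_{j\le i}\norm{X_j}=o(\log i)$. The main technical burden is precisely the monotonicity case split in $\wdeuxn$ (resp.\ its [A2bis] analogue); once the case split is organised, everything reduces to routine Abel summation.
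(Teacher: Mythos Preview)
Your proof is correct and follows essentially the same approach as the paper's: the same splitting along $\acc{\norm{X_i}\le v_i}$, the same use of (\ref{maj2supfchap}) with monotonicity of $(m_n)$ for the first piece, and the same Markov-tail plus Abel-summation mechanism (which the paper packages as Lemma~\ref{LemmaMajSomInd}) for the second. The only cosmetic difference is that you obtain the crude bound on $\norm{\widehat f_i(X_i)-f(X_i)}$ directly from the convex-combination structure of (\ref{deffchap}), whereas the paper invokes the ready-made inequality (\ref{majfchap}) $\norm{\widehat f_i(x)-f(x)}\le C_{f,K}+\norm{x}+\varepsilon_i^\#$ derived in Appendix~B; both yield a $\norm{X_i}$ term plus an $o(i^{1/m})$ (resp.\ $o(\log i)$) term plus a constant, so the downstream estimates are identical.
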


\begin{proof}
For any $n$, let us denote by $\pi_n$ the prediction error 
defined by $\pi_n = \fchap(X_n) - f(X_n)$.
To establish (\ref{MajErrPred}), we consider the following decomposition
\begin{equation}  \label{decomp-pred-error-fchap}
\frac{1}{n} \sum_{i=1}^{n} \norm{\pi_i} = 
\frac{1}{n} \sum_{i=1}^{n} \norm{\pi_i} \indEi
+ \frac{1}{n} \sum_{i=1}^{n} \norm{\pi_i} \indEic
\end{equation}
On one hand, we easily deduce that 
\begin{equation}\label{aver-pred-error-term1}
\frac{1}{n} \sum_{i=1}^{n} \norm{\pi_i} \indEi
\leq \frac{1}{n} \sum_{i=1}^{n} \sup_{\|x\| \leq v_i} \norm{\widehat f_i(x) - f(x)}
\end{equation}
Now, using (\ref{maj2supfchap}) and the fact that $(m_n)$ 
is assumed to be decreasing, we derive that
\begin{equation}\label{ResErrPred1}
\frac{1}{n}\sum_{i=0}^{n-1} \| \widehat{f}_i(X_i) - f(X_i) \| \indEi  
= O\pab{n^{-\beta} m_n^{-1}}\quad\mbox{a.s.} 
\end{equation}
On the other hand, thanks to (\ref{majfchap}) in Appendix B, we infer that
\begin{equation}\label{MajErrPred2}
\sum_{i=1}^{n} \norm{\pi_i} \ind_{\acc{\left\|X_i\right\| > v_i}}
 \leq  \sum_{i=1}^{n} \pa{\norm{X_i} + \varepsilon^{\sharp}_i + C_{f,K}}\ind_{\acc{\left\|X_i\right\| > v_i}} \quad \mbox{a.s.}
\end{equation}
The bound of (\ref{MajErrPred2})
completely depends on the moment assumption on $(\veps_n)$.
Indeed, when $(\veps_n)$ satisfies [A2], we have $\varepsilon^{\sharp}_n = o(n^{1/m})$ and 
we know by (\ref{MajSumetSupNormX}) that 
$\sum_{i=1}^{n} \norm{X_i}^m = O(n)$ a.s.
Therefore, by Lemma\,\ref{LemmaMajSomInd} with $Z_n = \norm{X_n}$
and $g(z) = z^m$, we derive that
\begin{equation}
\sum_{i=1}^{n} \pa{\norm{X_i} + C_{f,K}}\ind_{\acc{\left\|X_i\right\| > v_i}} 
= O\pab{\sum_{i=1}^{n} v_i^{1- m}} \quad \mbox{a.s.}
\end{equation}
which defines the term $\wunn$.
In addition, if the sequence $\pab{n^{1/m}\,v_n^{-m}}$ is decreasing, 
according to Lemma\,\ref{LemmaMajSomInd}
we have 
\begin{equation} \label{sum-sup-noise}
\sum_{i=1}^{n} \varepsilon^{\sharp}_i\, \ind_{\acc{\left\|X_i\right\| > v_i}} 
= O\pab{\sum_{i=1}^{n} i^{1/m}\, v_i^{- m}} \quad \mbox{a.s.}
\end{equation}
which defines the term $\wdeuxn$.
If the sequence $\pab{n^{1/m}\,v_n^{-m}}$ is not decreasing, we obtain
\[
\sum_{i=1}^{n} \varepsilon^{\sharp}_i\, \ind_{\acc{\left\|X_i\right\| > v_i}} 
\leq \varepsilon^{\sharp}_n \sum_{i=1}^{n} \ind_{\acc{\left\|X_i\right\| > v_i}} 
= o\pab{n^{1/m} \sum_{i=1}^{n} v_i^{- m}} \quad \mbox{a.s.}
\]
which defines the second form of $\wdeuxn$.
Hence, combining the previous results, we obtain
\begin{equation}
\frac{1}{n} \sum_{i=1}^{n} \| \widehat{f}_i(X_i) - f(X_i) \| \ind_{\acc{\left\|X_i\right\| > v_i}}
 =  O\pa{\frac{\wunn + \wdeuxn}{n}} \quad \mbox{a.s.}
\label{ResErrPred2}
\end{equation}
and the first part of Corollary\,\ref{CorErrPred} is established combining (\ref{ResErrPred1}) and
(\ref{ResErrPred2}).

\noindent
Now, if $(\veps_n)$ has a finite exponential moment of order $m$, 
then $\varepsilon^{\sharp}_n = o(\log n)$ a.s. and we know by 
(\ref{MajSumExpNormX}) that for any $a<m$, $\sum_{i=1}^{n} \exp\pa{a \norm{X_i}} = O(n)$ a.s.
Therefore, proceeding in the same manner, we find using Lemma\,\ref{LemmaMajSomInd} that
for any $a < m$,\ $\wunn = \sum_{i=1}^{n} v_i \, e^{- a v_i}$ \ and \
$\wdeuxn = \sum_{i=1}^{n} (\log i) e^{- a v_i}$ if the sequence
$\pab{( \log n) e^{- a v_n}}$ is decreasing and\ 
$\wdeuxn = \log n\sum_{i=1}^{n} e^{- a v_i}$ otherwise.
This achieves the proof of Corollary\,\ref{CorErrPred}.
\end{proof}

Corollary\,\ref{CorErrPred} does not state that the average prediction error of $\fchap$ converges to
0, see (\ref{res-error-pred}).
Obtaining such a result depends on the decrease at infinity of $p$ and
the existence of a well-suited sequence $(v_n)$.
The moment assumption on $(\veps_n)$ gives a first information on 
the possible choices of $(v_n)$.
As $(v_n)$ increases to infinity, we always have $\wunn / n = o(1)$.
Therefore, the sequence $(v_n)$ must be chosen in such a way that 
$\wdeuxn / n = o(1)$, which depends on the moment assumption.
Moreover, dealing with the term $n^{-\beta} / m_n$ requires to know the way $p$ is decreasing. 
The sequence $(v_n)$ has thus to be selected from a compromise between these two conditions. 
We give below three examples that well illustrate this compromise.

%%%%%%%%%%%%%%%%
%  Exemples
%%%%%%%%%%%%%%%%
\begin{example} \label{Example-dens-dec-pol}
{\rm
Assume that the decrease at infinity of $p$ is of the form
$C \norm{x}^{-\delta}$ with $C > 0$ and $\delta > 3$.
Then, $(\veps_n)$ has a finite polynomial moment of order $m$ with $m\in]2,\delta -1[$.
Let us choose the sequence $(v_n)$ under the form $v_n = A n^{\eta}$ for 
some positive constants $A$ and $\eta < \beta /  \delta$. 
It follows that
\begin{equation*}
m_n^{-1} = O\left( (v_n+R)^{\delta}\right) = O \left( n^{\delta \eta}  \right)
\end{equation*}
and [A6] holds since $\eta < \beta / \delta$.
In addition, as $m > 2$, we have $\wunn = o(n)$ and as soon as $\eta > 1 / m^2$,
we also have $\wdeuxn = o(n)$.
Of course, the condition 
$1 / m^2 < \eta < \beta /  \delta$ with $\beta < 1 / 2(d+1)$
implies that $\delta$ must be sufficiently large to ensure that 
$\delta / m^2 < 1/2(d+1)$. 
Thus, for such $\delta$ and any $\beta \in ]\delta/m^2\,,\,1/2(d+1)[$, 
the prediction errors satisfy (\ref{res-error-pred}) and, since $\eta < \beta/\delta < 1 / m$,
we have $\wunn = O(\wdeuxn)$ and therefore
\begin{equation*}
\frac{1}{n} \sum_{i=1}^{n} \| \widehat{f}_i(X_i) - f(X_i) \|  =  O\pa{n^{-\beta + \eta \delta}}
+ O\pa{n^{-m\eta + 1/m}}\quad \mbox{a.s.}
\end{equation*}
The best rate of convergence is obtained by taking  
$\eta = (\beta + 1/m) /(m+\delta)$ and we find that for any 
$\beta \in ]\delta/m^2\,,\,1/2(d+1)[$,
\begin{equation}\label{BestBoundErrPred}
\frac{1}{n} \sum_{i=1}^{n} \| \widehat{f}_i(X_i) - f(X_i) \|  =  O\pa{n^{-\beta \tau}}\quad \mbox{a.s.}
\end{equation}
with $\tau = (m - \delta/m\beta) /(m+\delta)$.\ $\emptysq$
}\end{example}

%%%%%%%%%%%%%%%%%%%%%%%%%%%%%
\begin{example} \label{Example-dens-dec-exp}
%%%%%%%%%%%%%%%%%%%%%%%%%%%%%
{\rm 
Assume the decrease at infinity of $p$ is of the form
$C  \exp\pa{-\delta\norm{x}}$ with $C > 0$ and $\delta > 0$.
Thus, $(\veps_n)$ has a finite exponential moment of order $m < \delta$.
Let us choose the sequence $(v_n)$ under the form $v_n = \eta \log n$ 
with $\eta >0$. 
Then, $\wunn + \wdeuxn = O\pa{n^{1- a \eta} \log n}$ with $a < m$, 
$m_n^{-1} = O \left( n^{\delta \eta} \right)$ and 
[A6] holds as soon as $\eta < \beta / \delta$.
In that case, the prediction errors satisfy (\ref{res-error-pred}).
The best bound is obtained by choosing $\eta = \beta / 2m$ :
for $\beta < 1 /2(d+1)$, result (\ref{BestBoundErrPred}) stands
for any $\tau < 1/2$.\ $\emptysq$
}\end{example}

%%%%%%%%%%%%%%%%%%%%%%%%%%%
\begin{example} \label{Example-dens-gauss}
%%%%%%%%%%%%%%%%%%%%%%%%%%%
{\rm 
Assume now that $(\veps_n)$ is Gaussian with covariance matrix $\Gamma$. 
Thanks to Remark \ref{RemBruitGaus}, we know
that\ $\sum_{i=1}^{n} \exp\pab{ a \norm{X_i}^2}= O(n)$\ 
where $a = (1 - r_f) 2\lambda_{\min}\pa{\Gamma}$, and
$\veps_n^\# = o\pa{\sqrt{\log n}}$ a.s.
Hence following the proof of Corollary\,\ref{CorErrPred}, we find
that result (\ref{MajErrPred}) holds with 
$\wunn = \sum_{i=1}^{n} v_i \, e^{- a v_i^2}$ \ and \
$\wdeuxn = \sqrt{\log n}\sum_{i=1}^{n} e^{- a v_i^2}$.
Therefore, taking $v_n = (\eta\,\log n)^{1/2}$ with $\eta > 0$
leads to $\wunn + \wdeuxn = O\pa{\sqrt{\log n}\,n^{1 - a \eta}} = o(n)$.
In addition, using (\ref{MinDensGauss}), we derive
that $m_n^{-1} = O(n^{\eta / 2 c \lambda_{\min}(\Gamma)})$ for any $c \in ]0,1[$.
Then [A6] is fulfilled by choosing $\eta < 2\beta c \lambda_{\min}(\Gamma)$
and the prediction errors satisfy (\ref{res-error-pred}).
Moreover, choosing $\eta = 2\beta c \lambda_{\min}(\Gamma) / (1 + c(1 - r_f))$ implies the better rate of convergence:
for any $\beta < 1/2(d+1)$ and any $c\in]0,1[$, 
result (\ref{BestBoundErrPred}) stands
for $\tau  = c(1 - r_f) / (1 + c(1 - r_f) < 1/2$.\ $\emptysq$
}\end{example}

%%%%%%%%%%%%%%%%%%%%%%%%%%%%%%%%%%%%%%%%
\section{Asymptotic properties of the KDE $\pchap$}
\setcounter{equation}{0} 
%%%%%%%%%%%%%%%%%%%%%%%%%%%%%%%%%%%%%%%%

This section is concerned with the asymptotic properties
of the KDE $\pchap$.
More precisely, we present the uniform strong consistency with rate for $\pchap$
as well as a pointwise and multivariate central limit theorem. 

%%%%%%%%%%%%%%%%%%%%%%%%
\subsection{Strong consistency}\label{subsec-strong-cons-pchap}
%%%%%%%%%%%%%%%%%%%%%%%%

The following Theorem \ref{th.sur.p} gives the conditions 
that ensure the uniform strong consistency of $\pchap$. 

%%%%%%%%%%%%%
\begin{theorem}\label{th.sur.p}
%%%%%%%%%%%%%
Assume that [A1]--[A5] hold true. If there is a sequence $(v_n)$ of the form $v_n = An^\eta$ with $A > 0$ 
and $\eta > 1/m^2$ such that the sequence $(m_n)$ is decreasing and satisfies
$m_n^{-1} = o(n^\beta)$ where $\beta  < 1/ 2(d+1)$, then almost surely
$\sup_{y \in \bkRd } | \widehat{p}_n(y) -p(y)| = o(1)$\ 
and more precisely,
\begin{equation}\label{borneconvpchapsurRd}
\sup_{y \in \bkRd } | \widehat{p}_n(y) -p(y)| = o(n^{\gamma - 1})
+ O(n^{-\alpha}) + O\pa{\frac{n^{-\beta}}{m_n}} + O\pa{w_n}
\end{equation}
where $w_n = n^{-m\eta + 1/m}$ if $\eta \leq 1/m$ and $w_n = n^{-\eta (m-1)}$ otherwise.
Moreover, if [A2bis] holds instead of  [A2] and if there is a sequence $(v_n)$ 
of the form $v_n = \eta \log n$\ with $\eta > 0$ such that the sequence 
$(m_n)$ is decreasing and satisfies $m_n^{-1} = o(n^\beta)$ where $\beta  < 1/ 2(d+1)$, 
then $\pchap$ satisfies (\ref{borneconvpchapsurRd}) with $w_n = n^{- a \eta} (\log n)$ and $a < m$. 
\end{theorem}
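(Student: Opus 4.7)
The plan is to use the classical three-term decomposition for residual-based kernel density estimators. Introduce the oracle estimator based on the true innovations,
$$\widetilde{p}_n(y) = \frac{1}{n}\sum_{i=1}^{n} i^{\alpha d} K\pab{i^\alpha(\veps_i - y)},$$
and write
$$\pchap(y) - p(y) = \underbrace{\pchap(y) - \widetilde{p}_n(y)}_{A_n(y)} + \underbrace{\widetilde{p}_n(y) - \bkE\widetilde{p}_n(y)}_{B_n(y)} + \underbrace{\bkE\widetilde{p}_n(y) - p(y)}_{C_n(y)}.$$
The four pieces of the claimed bound will correspond to $C_n$ (giving $O(n^{-\alpha})$), $B_n$ (giving $o(n^{\gamma-1})$), and to the split of $A_n$ according to whether $\norm{X_{i-1}}\leq v_{i-1}$ or not (yielding $O(n^{-\beta}/m_n)$ and $O(w_n)$, respectively).

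The bias $C_n$ is standard: a change of variables combined with $p\in C^1(\bkRd)$ having bounded derivative (assumption [A4]) and $\int K = 1$ gives $\abs{\bkE[i^{\alpha d} K(i^\alpha(\veps_i - y))] - p(y)} = O(i^{-\alpha})$ uniformly in $y$, whence $\sup_y \abs{C_n(y)} = O(n^{-\alpha})$ by averaging. For the stochastic term $B_n$, the summands $D_i(y) = i^{\alpha d}\acc{K(i^\alpha(\veps_i - y)) - \bkE K(i^\alpha(\veps_i - y))}$ form a martingale difference sequence (by independence of $\veps_i$ from the past) bounded by $C\, i^{\alpha d}$ with conditional variance of order $i^{\alpha d}$; a sharp martingale exponential inequality then delivers $\abs{B_n(y)} = o(n^{\gamma-1})$ pointwise, for a suitable $\gamma < 1$. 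To upgrade this to a uniform bound over $\bkRd$, restrict first to a slowly growing ball $\norm{y}\leq M_n$ (outside this ball the compact support of $K$ together with the moment control on $(\veps_i)$ forces $\pchap$ and $p$ to vanish eventually), then cover the ball by a polynomial $\varepsilon_n$-net and transfer the pointwise bound via Lipschitz continuity of the rescaled kernel.

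The residual term $A_n$ is the crux of the proof. Setting $\pi_{i-1} = \widehat{f}_{i-1}(X_{i-1}) - f(X_{i-1})$ so that $\widehat{\veps}_i = \veps_i - \pi_{i-1}$, the Lipschitz property of $K$ from [A5] yields, uniformly in $y$,
$$\abs{A_n(y)} \leq \frac{L_K}{n}\sum_{i=1}^{n} i^{\alpha(d+1)} \norm{\pi_{i-1}}.$$
On $\acc{\norm{X_{i-1}}\leq v_{i-1}}$, Theorem \ref{cvud.fchap} together with the monotonicity of $(m_n)$ provides $\norm{\pi_{i-1}} = O(i^{-\beta}/m_i)$ (the restriction $\beta < 1/2(d+1)$ and the standing hypothesis $m_n^{-1} = o(n^\beta)$ are used here), and, for $\alpha$ small enough, the averaged contribution is $O(n^{-\beta}/m_n)$. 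On $\acc{\norm{X_{i-1}} > v_{i-1}}$, the very computation carried out in the proof of Corollary \ref{CorErrPred} — invoking Lemma \ref{LemmaMajSomInd}, the moment bound $\sum\norm{X_i}^m = O(n)$ from [A2] (respectively $\sum \exp(a\norm{X_i}) = O(n)$ from [A2bis]) and the control of $\veps_n^\sharp$ — delivers precisely the $O(w_n)$ contribution in the two forms stated in the theorem.

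The main obstacle will be twofold. First, the uniform control of $B_n$ on all of $\bkRd$ requires a careful combination of a covering argument with tail localization via the compact support of $K$, which is also what fixes the attainable exponent $\gamma$ through the variance computation above. Second, keeping the weighted sum in $A_n$ of the right order despite the $i^{\alpha(d+1)}$ factor inherited from the Lipschitz bound on $K$ is delicate; it is precisely here that the restriction $\beta < 1/2(d+1)$ and the implicit smallness of $\alpha$ combine with the truncation sequence $(v_n)$, chosen to match the decay of $p$, to produce the announced rate.
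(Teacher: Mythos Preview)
Your decomposition is natural, but the treatment of the residual term $A_n$ has a genuine gap that prevents you from reaching the stated rate. When you apply the Lipschitz property of $K$ to $K\pab{i^\alpha(\widehat\veps_i-y)}-K\pab{i^\alpha(\veps_i-y)}$, the rescaling forces the factor $i^\alpha$ into the bound, so that
\[
\sup_{y}\abs{A_n(y)}\ \leq\ \frac{L_K}{n}\sum_{i=1}^{n} i^{\alpha(d+1)}\norm{\pi_{i-1}}.
\]
On the event $\{\norm{X_{i-1}}\leq v_{i-1}\}$ you then get a contribution of order $n^{\alpha(d+1)-\beta}/m_n$, not $n^{-\beta}/m_n$. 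Since $\alpha\in]0,1/d[$ is a fixed parameter of the estimator, the extra positive power $n^{\alpha(d+1)}$ cannot be made to disappear; in fact, under the sole hypothesis $m_n^{-1}=o(n^\beta)$ this term need not even tend to zero. Your closing remark that this step is ``delicate'' is an understatement: the crude Lipschitz bound on the rescaled kernel simply does not yield the theorem as stated.

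The paper circumvents this by a different decomposition. Instead of comparing $\pchap$ with the oracle $\widetilde p_n$, it writes
\[
\pchap(y)-p(y)=\frac{1}{n}\,G_{3,n}(y)+\frac{1}{n}\sum_{i=1}^{n}\Bigl(p\bigl(y+\pi_{i-1}\bigr)-p(y)\Bigr),
\]
where $G_{3,n}$ is the martingale-plus-remainder of Lemma~\ref{LemmaG} taken with $U_{i-1}=\widehat f_{i-1}(X_{i-1})$. The point is that the centering inside $G_{3,n}$ already uses the \emph{estimated} $\widehat f_{i-1}$, so the kernel term is handled exactly as a martingale with increasing process of order $n^{1+\alpha d}$, uniformly in the (predictable) sequence $U_{i-1}$; this is what produces the $o(n^{\gamma-1})+O(n^{-\alpha})$ part without any extra $i^\alpha$ penalty. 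The prediction errors $\pi_{i-1}$ then only enter through the second sum, where one invokes the Lipschitz property of $p$ (assumption [A4]) rather than of $K$, giving $\sup_y\abs{B_n(y)}=O\bigl(\sum_{i}\norm{\pi_{i-1}}\bigr)$ with no rescaling factor. Corollary~\ref{CorErrPred} then delivers exactly $O(n^{-\beta}/m_n)+O(w_n)$.
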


\begin{proof}
The proof is postponed in Appendix C.
\end{proof}

Theorem \ref{th.sur.p} directly establishes the convergence of the estimation error of $p$ uniformly on whole $\bkR^d$, 
that is, without first studying what happens on fixed compact sets nor at fixed $y$. The reason is that no matter what set of 
$y$ you consider, you always have to verify the assumptions required for the convergence (\ref{res-error-pred}) of the 
average prediction error of $\fchap$, see Section \ref{section-fchap}. These assumptions are strong enough to directly 
obtain uniform strong consistency properties of $\pchap$ without additional hypothesis.

\vspace{1ex}

The convergence of $\pchap$ thus relies on the choice of sequence $(v_n)$ which depends on the way $p$ is decreasing at infinity. 
Then the question is: ``Which pdf are consistently estimated with $\pchap$ ?'' 
We have exhibited three families of densities (see examples in \S \ref{AvPredErr.fchap}):
densities with polynomial or exponential decrease at infinity
and Gaussian densities.
Let us specify the different results for these three examples.

\finl
{\bf Example \ref{Example-dens-dec-pol} (continued)}
Recall that the best rate of convergence for the average prediction errors
is obtained by choosing the sequence $(v_n)$ under the form $v_n = A n^{\eta}$ 
with $A > 0$, $\eta = (\beta + 1/m) /(m+\delta)$ and $\beta \in ]\delta/m^2\,,\,1/2(d+1)[$.
With this choice, we obtain : 
\begin{equation}\label{convpchapsurRdcaspoly}
\sup_{y \in \bkRd } | \widehat{p}_n(y) -p(y)| = o(n^{\gamma - 1})
+ O(n^{-\alpha}) + O\pa{n^{-\beta \tau}}
\end{equation}
with $\tau = (m - \delta/m\beta) /(m+\delta)$.\ $\emptysq$

\vspace{1ex}

The following remark shows how the convergence rate can be improved and constraint on $\delta$
relaxed in the polynomial case,  by considering an another KDE of $p$.

%%%%%%%%%%%
\begin{remark}\label{RemEstTronque}
%%%%%%%%%%%
{\rm The main difficulty in proving the consistency of $\pchap$ comes from the study of the prediction errors
$\fchap(X_n) - f(X_n)$, and in particular from the term
$\sum_{i=1}^{n} \norm{\widehat f_i(X_i) - f(X_i)} \ind_{\acc{\left\|X_i\right\| > v_i}}$
(see Corollary \ref{CorErrPred}).
This term is studied using a crude upper bound involving $\veps_n^\#$.
In the framework of [A2], this upper bound led us to introduce additional and 
restrictive constraints on $\eta$ and $\beta$ for proving that $\wdeuxn = o(n)$.
One way to avoid the study of such a term is to adapt the plug-in estimator $\pchap$ 
with truncated residuals 
$\widehat\veps^\ast_n = X_n - \widehat f_{n-1}(X_{n-1})\ind_{\acc{\|X_{n-1}\| \le v_{n-1}}}$
%
%with truncated residuals defined as follows:
%\begin{equation*}
%\widehat\veps^\ast_n = X_n - \widehat f_{n-1}(X_{n-1})\ind_{\acc{\|X_{n-1}\| \le v_{n-1}}}
%\end{equation*}
%
leading to the following truncated version of the kernel density estimator:
\begin{eqnarray}\label{defpchapetoil}
\pchap^\ast (y) & = & \frac{1}{n} \sum_{i=1}^{n}\
i^{\alpha d} K\paB{i^\alpha\pab{X_i - \widehat f_{i-1}(X_{i-1})\ind_{\acc{\|X_{i-1}\| \le v_{i-1}}} -y}}
\end{eqnarray}
The sequence $(v_n)$ still has to satisfy
assumption [A6] to provide the uniform almost sure convergence over dilated sets of $\fchap$. 
In the context of [A2], we choose $v_n$ of the form $v_n = A n^\eta$
with $A > 0$ and $\eta > 0$.
Following the proof of Theorem\,\ref{th.sur.p},
we easily show that under [A1]-[A5]
\begin{equation*}
\sup_{y \in \bkRd } | \widehat{p}^\ast_n(y) -p(y)| = o(n^{\gamma - 1})
+ O(n^{-\alpha}) + O\pa{\frac{n^{-\beta}}{m_n}} + O\pa{n^{-(m-1)\eta}} \quad \mbox{\rm a.s.}
\end{equation*}
if $(v_n)$ is such that $(m_n)$ is decreasing and $m_n^{-1} = o(n^\beta)$.
In particular, in the context of Example\,\ref{Example-dens-dec-pol}, the best rate of convergence is obtained taking
$\eta = \beta / (m+\delta -1)$ with $\beta < 1/2(d+1)$.
With that choice, the KDE $\pchap^\ast$
satisfies (\ref{convpchapsurRdcaspoly}) 
with $\tau  = (m-1) / (m+\delta -1)$, which yields a better convergence rate than that of 
$\pchap$ since  $(m - 1)/(m+\delta - 1) >\!> (m - \delta/m\beta)/(m+\delta)$. This result 
has moreover been obtained by only assuming that $\delta > 3$.

\noindent
Of course, the main drawback of this KDE based on truncated residuals 
is its use in practice, as the sequence $(v_n)$ shall of course be suitably chosen.
}\end{remark}

\vspace{1ex}

\finl
{\bf Example \ref{Example-dens-dec-exp} (continued)}
When the decrease of $p$ is exponential, the KDE $\pchap$ satisfies
(\ref{convpchapsurRdcaspoly}) for any $\tau < 1/2$.\ $\emptysq$  

\finl
{\bf Example \ref{Example-dens-gauss} (continued)}
In the Gaussian case, the KDE $\pchap$ satisfies
(\ref{convpchapsurRdcaspoly}) 
with $\tau  = c(1 - r_f) / (1 + c(1 - r_f)$ for any $c\in]0,1[$.\ $\emptysq$

%%%%%%%%%%%%%%%%%%%%%%%%
\subsection{Central limit theorem}
%%%%%%%%%%%%%%%%%%%%%%%%

In this section, we present a pointwise and multivariate central limit theorem for $\pchap$.

%%%%%%%%%%
\begin{theorem}\label{TheoCLTpchap}
%%%%%%%%%%
Assume that [A1]--[A5] hold.
Assume also there is a sequence $(v_n)$ of the form $v_n = An^\eta$ with $A> 0$ 
and $\eta \in ]1/m^2\,,\,1/m^2 + 1/m(d+2)[$ such that the sequence $(m_n)$ is decreasing and 
satisfies 
\begin{equation}\label{CondmnTLC}
n^{(1 - \alpha d - 2 \beta)/2}\, m_n^{-1}\ =\ o(1)
\end{equation}
for some $\alpha\in](1 - 2(m\eta - 1/m))/d\,,\,1/d[$ and $\beta\in]0\,,\,1/2(d+1)[$.

\noindent
Then, for any $y\in\bkR^d$,
\begin{equation} \label{CLTpchap}
Z_n(y) = \sqrt{n^{1 - \alpha d}}\pab{\pchap(y) - p(y)} \tenddist 
{\mathcal N}\pa{0, \frac{\|K\|_2^2\,p(y)}{1 + \alpha d}} = Z(y)
\end{equation} 
where $\norm{K}_2^2 = \displaystyle\int_{\bkR^d} K^2(t)\,\dt$.
In addition, for $q$ distinct points $y_1, \cdots, y_q$ of $\bkR^d$, we also have 
\begin{equation}\label{multiclt} 
\pab{Z_n(y_1), \cdots, Z_n(y_q)} \tenddist  
\pab{Z(y_1), \cdots, Z(y_q)} 
\end{equation} 
where $Z(y_1), \cdots, Z(y_q)$ are independent. 
\vspace{1ex}

\noindent
Moreover, if [A2bis] holds instead of  [A2], the KDE $\pchap$ satisfies the pointwise and multivariate CLT
if there is a sequence $(v_n)$ of the form $v_n = \eta \log n$\ with $\eta \in]0\,,\,1/2a[$ and $a < m$, thus yielding that 
the condition (\ref{CondmnTLC}) is fulfilled for some $\alpha\in](1 - 2a \eta)/d\,,\,1/d[$.
\end{theorem}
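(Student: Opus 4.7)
My plan is to decompose $\widehat p_n(y) - p(y)$ into an \emph{oracle} estimator built on the true (but unobserved) innovations, a deterministic bias, and a plug-in correction. Concretely, I will introduce
$$\widetilde p_n(y) = \frac{1}{n}\sum_{i=1}^n i^{\alpha d}K\bigl(i^\alpha(\veps_i - y)\bigr)$$
and write $\widehat p_n(y) - p(y) = [\widetilde p_n(y) - \bkE\widetilde p_n(y)] + [\bkE\widetilde p_n(y) - p(y)] + [\widehat p_n(y) - \widetilde p_n(y)]$. I will show that, once multiplied by $\sqrt{n^{1-\alpha d}}$, only the first bracket survives and yields the stated Gaussian limit.

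For the oracle part, since $(\veps_i)$ is i.i.d.\ with bounded density $p$, the summands $U_{i,n}(y) = i^{\alpha d}K(i^\alpha(\veps_i-y))$ are independent. A change of variable $u = i^\alpha(z-y)$ gives $\bkE[U_{i,n}(y)^2] \sim i^{\alpha d}\|K\|_2^2\,p(y)$, and a Riemann-sum argument yields $\mathrm{Var}(n^{(1-\alpha d)/2}\widetilde p_n(y)) \to \|K\|_2^2\,p(y)/(1+\alpha d)$. Since $|U_{i,n}(y)| \le i^{\alpha d}\|K\|_\infty$ and $\alpha < 1/d$, the Lindeberg condition holds and the Lindeberg-Feller CLT applies. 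For the bias, Lipschitz regularity of $p$ (from [A4]) gives $\bkE[\widetilde p_n(y)] - p(y) = \tfrac{1}{n}\sum_i\int K(u)[p(y+u/i^\alpha)-p(y)]\,du = O(n^{-\alpha})$; the lower bound $\alpha > (1-2(m\eta-1/m))/d$ combined with $\eta < 1/m^2 + 1/(m(d+2))$ forces $\alpha > 1/(d+2)$, so $n^{(1-\alpha d)/2}\cdot n^{-\alpha} = n^{(1-\alpha(d+2))/2} \to 0$.

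The main obstacle will be the plug-in correction. Setting $D_i = f(X_{i-1}) - \widehat f_{i-1}(X_{i-1})$, so that $\widehat\veps_i - \veps_i = D_i$, the Lipschitz property of $K$ together with its compact support in $B(0,M)$ yields the sharpened bound
$$|K(i^\alpha(\veps_i+D_i-y)) - K(i^\alpha(\veps_i-y))| \le L\,i^\alpha\|D_i\|\,\ind_{\{\|\veps_i-y\|\le M/i^\alpha+\|D_i\|\}}.$$
I will split the sum according to $\ind_{\{\|X_{i-1}\|\le v_{i-1}\}}$ and its complement. On the good event, Theorem~\ref{cvud.fchap} supplies $\|D_i\|\,\ind_{\{\|X_{i-1}\|\le v_{i-1}\}} = O(i^{-\beta}/m_i)$ and, crucially, $D_i$ is $\sigma(X_0,\veps_1,\ldots,\veps_{i-1})$-measurable while $\veps_i$ is independent of it with bounded density; combining these two facts reduces the expected size of the good-event contribution to an order matching the left-hand side of \eqref{CondmnTLC}, so condition \eqref{CondmnTLC} is precisely what guarantees that this piece is $o(1)$ after the $n^{(1-\alpha d)/2}$ rescaling. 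On the bad event, the pointwise bound $\|D_i\|\le\|X_i\|+\veps_i^\#+C_{f,K}$ (inequality~(\ref{majfchap}) in Appendix~B) together with the moment estimates developed in the proof of Corollary~\ref{CorErrPred} yields a contribution that $\eta > 1/m^2$ (respectively $\eta\in\,]0,1/(2a)[$ under [A2bis]) forces to be negligible. The difficulty here is genuine: without the compact support of $K$, the naive Lipschitz factor $i^{\alpha(d+1)}\|D_i\|$ would overwhelm the CLT scaling, so exploiting the rarity of $\{\veps_i\approx y\}$ (probability $O(i^{-\alpha d})$) is essential, and the carefully tuned intervals for $\alpha$, $\beta$ and $\eta$ in the hypotheses are exactly what makes the bounds close.

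Finally, for the multivariate statement I will invoke the Cram\'er-Wold device on $\sum_k\lambda_k Z_n(y_k)$. For distinct points $y_1,\ldots,y_q$ and any $i$ large enough that $2M/i^\alpha < \min_{k\ne\ell}\|y_k-y_\ell\|$, the products $K(i^\alpha(\veps_i-y_k))K(i^\alpha(\veps_i-y_\ell))$ vanish identically, so $\bkE[U_{i,n}(y_k)U_{i,n}(y_\ell)] = 0$ eventually. The limiting covariance of $(Z_n(y_1),\ldots,Z_n(y_q))$ is therefore diagonal with entries $\|K\|_2^2\,p(y_k)/(1+\alpha d)$, and the same Lindeberg-Feller argument applied to $\sum_k\lambda_k Z_n(y_k)$ gives the joint convergence with independent Gaussian marginals, while the plug-in correction for each coordinate is handled exactly as in the pointwise case.
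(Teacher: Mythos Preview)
Your decomposition is different from the paper's, and the plug-in step contains a genuine gap. The paper does \emph{not} introduce an oracle estimator; instead it writes
\[
\widehat p_n(y)-p(y)=\frac{1}{n}\bigl(G_{3,n}(y)+B_n(y)\bigr),
\]
where $B_n(y)=\sum_i\bigl(p(y+\widehat f_{i-1}(X_{i-1})-f(X_{i-1}))-p(y)\bigr)$ and $G_{3,n}$ is the object of Lemma~\ref{LemmaG} with $U_{i-1}=\widehat f_{i-1}(X_{i-1})$. Since $\widehat f_{i-1}(X_{i-1})$ is $\calF_{i-1}$-measurable, the \emph{martingale CLT} is applied directly to the sum built on the residuals $\widehat\veps_i$, and its bracket converges to $\|K\|_2^2\,p(y)/(1+\alpha d)$ because $\sum_i\|\widehat f_{i-1}(X_{i-1})-f(X_{i-1})\|=o(n)$. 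The smoothness of $p$ then controls both the remainder $R_n$ inside $G_{3,n}$ and the term $B_n$, the latter reducing exactly to the average prediction error and giving conditions \eqref{CondmnTLC} and $\alpha d>1-2(m\eta-1/m)$.

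Your route requires $n^{(1-\alpha d)/2}\bigl(\widehat p_n(y)-\widetilde p_n(y)\bigr)=o_P(1)$, and this is where the argument breaks. Writing $\Delta_i=K(i^\alpha(\veps_i+D_i-y))-K(i^\alpha(\veps_i-y))$, the \emph{conditional mean} $\sum_i i^{\alpha d}\bkE[\Delta_i\mid\calF_{i-1}]$ is indeed $O(\sum_i\|D_i\|)$ by smoothness of $p$, and this matches the paper's $B_n$ and condition~\eqref{CondmnTLC}. But the remaining \emph{martingale part} of $\sum_i i^{\alpha d}\Delta_i$ is not negligible under the stated hypotheses. Its bracket is $\sum_i i^{2\alpha d}\bkE[\Delta_i^2\mid\calF_{i-1}]$, and $\bkE[\Delta_i^2\mid\calF_{i-1}]=i^{-\alpha d}\int[K(u+i^\alpha D_i)-K(u)]^2\,p(y+u/i^\alpha)\,du$. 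Once $i^\alpha\|D_i\|$ exceeds the diameter of the support of $K$ (and nothing in the hypotheses prevents this: the available bound $\|D_i\|=O(i^{-\beta}/m_i)$ combined with \eqref{CondmnTLC} only gives $i^\alpha\|D_i\|=o(i^{(\alpha(d+2)-1)/2})$, a positive power of $i$ since $\alpha>1/(d+2)$), the two kernels have disjoint supports and $\bkE[\Delta_i^2\mid\calF_{i-1}]\sim 2\|K\|_2^2\,p(y)\,i^{-\alpha d}$. The bracket is then of exact order $n^{1+\alpha d}$, so $n^{-(1+\alpha d)/2}$ times this martingale is $O_P(1)$, not $o_P(1)$. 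Equivalently: your oracle martingale and the paper's martingale both converge to the same Gaussian, but their \emph{difference} need not tend to zero. Your indicator bound $\ind_{\{\|\veps_i-y\|\le M/i^\alpha+\|D_i\|\}}$ also does not rescue the absolute-value estimate, since the radius is governed by $\|D_i\|$ rather than $M/i^\alpha$, and the resulting drift acquires an extra factor $n^{\alpha}$ compared to \eqref{CondmnTLC}.

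The fix is precisely the paper's idea: do not peel off an oracle term, but run the martingale CLT on $\sum_i i^{\alpha d}\bigl(K(i^\alpha(\widehat\veps_i-y))-\bkE[K(i^\alpha(\widehat\veps_i-y))\mid\calF_{i-1}]\bigr)$ itself, using adaptedness of $\widehat f_{i-1}(X_{i-1})$; the limiting variance is unchanged because the perturbation $D_i$ enters the bracket only through $\sum_i\|D_i\|=o(n)$.
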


\begin{proof}
The proof is postponed in Appendix C.
\end{proof}

In the following we show that the three families of densities, 
for which we proved the consistency of $\pchap$, also benefit from the CLT.
\vspace{1ex}

\noindent
{\bf Example \ref{Example-dens-dec-pol} (continued)}
We know that $m_n^{-1} = O(n^{\eta\delta})$. 
Hence, condition (\ref{CondmnTLC}) is fulfilled as soon as 
$\alpha d > 1 - 2(\beta - \eta\delta)$ with $\eta < \beta / \delta$.
The two constraints on the bandwidth parameter $\alpha$
are the same if we take $\eta = (\beta + 1/m)/(m+\delta)$ with $\beta\in]\delta/m^2\,,\,1/2(d+1)|$.
Therefore,  the KDE $\pchap$ satisfies the pointwise and multivariate 
CLT for any $\alpha > (1 - 2\beta \tau)/d$ 
with $\tau = (m - \delta / m\beta)/(m + \delta)$.
\vspace{1ex}

\noindent
In this context, the rate of convergence can be improved using the truncated KDE $\pchap^\ast$.
Indeed, from Remark \ref{RemEstTronque}, it is easy to see that  $\pchap^\ast$ satisfies the CLT 
under the conditions (\ref{CondmnTLC})
and $n^{(1 - \alpha d)/2} n^{-(m-1)\eta} = o(1)$.
Thus taking $\eta = \beta / (m+\delta - 1)$, with $\beta < 1/2(d+1)$,
implies that  $\pchap^\ast$ satisfies the CLT
for any $\alpha > (1 - 2 \beta\tau )/d$ with $\tau = (m - 1)/(m+\delta - 1)$.
The pointwise convergence rate of $\pchap^\ast$ is better than that of $\pchap$
since  $(m - 1)/(m+\delta - 1) >\!> (m - \delta/m\beta)/(m+\delta)$.
In addition, constraints on $\delta$ and $\beta$ are really relaxed
since we only have to assume that $\delta > 3$ instead of assuming that $\delta$ is sufficiently large 
to ensure that $\delta < m^2 \beta$.\ $\emptysq$
\vspace{1ex}

\noindent
{\bf Example \ref{Example-dens-dec-exp} (continued)}
The KDE $\pchap$ satisfies the CLT 
for any $\alpha > (1 - 2\beta \tau)/d$ with $\beta\in]0\,,\,1/2(d+1)[$ and
$\tau < 1/2$.\ $\emptysq$

\finl
{\bf Example \ref{Example-dens-gauss} (continued)}
In the Gaussian case, the KDE $\pchap$ satisfies the CLT 
for any $\alpha > (1 - \beta \tau)/d$ with $\tau = (1 - r_f) / (1 - r_f + 1/c)$ 
and $c\in]0,1[$.

\noindent
The best rate of convergence in the CLT is obtained 
when $f$ is bounded. Indeed in this case,  $r_f = 0$ and
$\tau = c / (1 + c)$ with $c\in]0,1[$. 
Therefore the constant $\tau$ involving the condition on the bandwidth parameter 
$\alpha$, can be set as close to $1/2$ as possible. 
This allows the choice of the smaller $\alpha$ value, thus yielding 
to the best convergence rate that can obtained for the KDE.
When $f$ is not bounded and $r_f$ close to 1, the convergence rate obtained in the Gaussian 
case is far from this "best" rate.\ $\emptysq$

%%%%%%%%%%%
\begin{remark}
%%%%%%%%%%%
{\rm In the three previous examples, we have seen that the bandwidth parameter $\alpha$
involving the convergence rate in the CLT, must satisfy
the condition $\alpha > (1 - 2\beta \tau)/d$ with $\tau < 1/2$ and $\beta < 1/2(d+1)$.
Therefore, since $(1 - 2\beta \tau)/d >\!> 1/(d+2)$, 
we clearly have a big loss on the convergence rate in comparison with the sample 
model (ie. $f\equiv0$) or for example with the AR(1) model (ie. $f(x) = \theta x$).
Indeed, in these two cases, we can establish the same CLT for a kernel density estimator of $p$
under the condition $\alpha \in]\frac{1}{d+2}\,,\,\frac{1}{d}[$.  
}\end{remark}

%%%%%%%%%%%%%%%%%%%%%%%%%%%%%%%%%%%%%%%%
\section{Conclusion}
%%%%%%%%%%%%%%%%%%%%%%%%%%%%%%%%%%%%%%%%

The strong consistency with rate as well as the pointwise and multivariate 
asymptotic normality have been established for a kernel estimator $\pchap$ of the noise density 
in a functional autoregressive model.
This estimator is based on a predictor sequence of the noise constructed from
a nonparametric estimator $\fchap$ of the autoregression function. The properties of the 
KDE depend on the convergence  of the average prediction error of $\fchap$, which 
hampers the study of $\pchap$ and the clarifying of its convergence rate. Indeed, 
it requires the knowledge of the way the density decreases at infinity as well as
the choice of a well-suited sequence which controls the convergence of the prediction errors of
$\fchap$. We have at least exhibited three families of densities which are consistently estimated 
with $\pchap$ and benefit from the central limit theorem.
Nevertheless from a practical point of view, the convergence results ensure
that the KDE $\pchap$ may behave pretty well in many situations.

%%%%%%%%%%%%%%%%%%%%%%%
\section*{Appendix A}
%%%%%%%%%%%%%%%%%%%%%%%
\renewcommand{\thesection}{\arabic{section}} 
\renewcommand{\thesection}{\Alph{section}} 
\renewcommand{\theequation}{\thesection.\arabic{equation}} 
\newtheorem{lemapp}{Lemma}[section] 
\newtheorem{theoapp}[lemapp]{Theorem} 
\newtheorem{corolapp}[lemapp]{Corollary} 
\setcounter{section}{1} 
\setcounter{equation}{0} 
\setcounter{lemapp}{0} 
\def\calF{{\cal F}}

This appendix is devoted to two technical lemmas useful 
in the different proofs of the paper.

%%%%%%%%%%%%%%
\begin{lemapp}\label{LemmaMajSomInd}
Let $(Z_n)$ be a sequence of positive real random variables and
let $(v_n)$ be a sequence of positive real numbers increasing to infinity.
Assume that $\sum_{i=1}^{n} g(Z_i) = O(n)$ a.s. for some 
increasing positive function $g:\bkR \rightarrow \bkR$.
Let $b \geq 0$. 
Then, if the sequence $(v_n^b / g(v_n))$ is decreasing,
\begin{equation*}
\sum_{i=1}^{n} Z_i^b\,\ind_{\acc{Z_i > v_i}} = O\Bigl(\sum_{i=1}^{n} \frac{v_i^b}{g(v_i)}\Bigr)\quad\mbox{a.s.}
\end{equation*}
Moreover, if $(a_n)$ is a sequence of positive real numbers increasing to infinity
such that the sequence $(a_n / g(v_n))$ is decreasing, then
\begin{equation*}
\sum_{i=1}^{n} a_i\,\ind_{\acc{Z_i > v_i}} = O\Bigl(\sum_{i=1}^{n} \frac{a_i}{g(v_i)}\Bigr)\quad\mbox{a.s.}
\end{equation*}
\end{lemapp}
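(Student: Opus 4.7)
The plan is to reduce both assertions to a single Abel-summation computation applied to $\sum_{i=1}^{n} c_i\,g(Z_i)$ for a suitably chosen decreasing positive sequence $(c_i)$, exploiting the standing hypothesis $\sum_{i=1}^{n} g(Z_i) = O(n)$ a.s.

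For the second assertion the reduction is immediate: since $g$ is increasing, the pointwise bound $\ind_{\{Z_i > v_i\}} \leq g(Z_i)/g(v_i)$ holds, so $\sum_{i=1}^n a_i\, \ind_{\{Z_i > v_i\}} \leq \sum_{i=1}^n c_i\,g(Z_i)$ with $c_i = a_i/g(v_i)$ decreasing by hypothesis. For the first assertion I would split $Z_i^b\,\ind_{\{Z_i > v_i\}} = (Z_i^b / g(Z_i))\,g(Z_i)\,\ind_{\{Z_i > v_i\}}$ and use the monotonicity of $(v_n^b/g(v_n))$, reflecting the eventual monotonicity of the function $v \mapsto v^b/g(v)$ on $[v_1,\infty)$ (automatic in the polynomial and exponential regimes used in Corollary~\ref{CorErrPred}), to conclude that $Z_i^b/g(Z_i) \leq v_i^b/g(v_i)$ on $\{Z_i > v_i\}$. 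This yields the same form $Z_i^b\,\ind_{\{Z_i > v_i\}} \leq c_i\,g(Z_i)$, now with $c_i = v_i^b/g(v_i)$ decreasing.

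Next I would perform Abel summation on $\sum_{i=1}^{n} c_i\,g(Z_i)$. Setting $S_n = \sum_{i=1}^{n} g(Z_i)$, the hypothesis gives $S_n \leq C(\omega)\,n$ almost surely for $n$ large. Then
\[
\sum_{i=1}^{n} c_i\,g(Z_i) \;=\; c_n S_n + \sum_{i=1}^{n-1} (c_i - c_{i+1})\,S_i \;\leq\; C\,n\,c_n + C\sum_{i=1}^{n-1} i\,(c_i - c_{i+1}),
\]
and the one-line Abel identity $\sum_{i=1}^{n-1} i\,(c_i - c_{i+1}) = \sum_{i=1}^{n-1} c_i - (n-1)\,c_n$ collapses the right-hand side to $C\sum_{i=1}^{n} c_i$, giving both announced bounds after the identification of $c_i$ in each case.

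The main delicate point will be the reduction step for the first assertion: one must upgrade a discrete monotonicity hypothesis on the sequence $(v_n^b/g(v_n))$ into an inequality for $Z_i^b/g(Z_i)$ at the random value $Z_i > v_i$. This is clean whenever $v \mapsto v^b/g(v)$ is monotone on a half-line containing the range of $(v_n)$, which holds in every application made in the paper. All remaining steps are deterministic pathwise manipulations on the probability-one set where $\sum_{i=1}^{n} g(Z_i) = O(n)$, so the almost-sure qualifier transfers without issue to the final bound.
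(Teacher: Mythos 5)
The paper itself does not prove this lemma: it is declared ``very standard,'' the first assertion is attributed to Portier and Oulidi \cite{Portier-Oulidi}, and the second is said to be a direct adaptation. So there is no internal argument to compare against, and your proposal must stand on its own. The Abel-summation core is correct and clean: with $S_i = \sum_{j\le i} g(Z_j)$, the identity $\sum_{i\le n} c_i\,g(Z_i) = c_n S_n + \sum_{i<n}(c_i - c_{i+1})\,S_i$, the bound $S_i \le C(\omega)\,i$ a.s.\ (finitely many initial terms absorbed into the constant), the nonnegativity of $c_i-c_{i+1}$, and the telescoping $\sum_{i<n} i(c_i - c_{i+1}) = \sum_{i<n} c_i - (n-1)c_n$ together give $\sum_{i\le n} c_i\,g(Z_i) = O\bigl(\sum_{i\le n} c_i\bigr)$ a.s. The reduction $a_i\,\ind_{\acc{Z_i>v_i}} \le \bigl(a_i/g(v_i)\bigr) g(Z_i)$ for the second assertion follows at once from $g$ being increasing, so that part is complete.

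The caveat you raise for the first assertion is a genuine gap in the step as written, and it deserves more than a parenthetical. The pointwise domination $Z_i^b / g(Z_i) \le v_i^b/g(v_i)$ on $\acc{Z_i > v_i}$ needs the \emph{function} $z \mapsto z^b/g(z)$ to be nonincreasing on $[v_i,\infty)$; monotonicity of the sampled sequence $\pa{v_n^b/g(v_n)}$ alone does not deliver it. Taking $j\ge i$ with $v_j \le Z_i < v_{j+1}$ only yields $Z_i^b/g(Z_i) < v_{j+1}^b/g(v_j)$, and the sequence hypothesis controls $v_{j+1}^b/g(v_{j+1})$, not $v_{j+1}^b/g(v_j)$; an increasing $g$ that is essentially flat on each $[v_j,v_{j+1})$ and jumps near the grid points defeats the pointwise bound while keeping the sampled sequence monotone. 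So what your argument actually establishes is the lemma under the marginally stronger hypothesis that $z\mapsto z^b/g(z)$ is eventually nonincreasing. In every use made of the lemma in the paper --- $g(z)=z^m$ with $m>2$ and $b\in\acc{0,1}$, $g(z)=e^{az}$, and $g(z)=e^{az^2}$ --- that functional monotonicity is automatic, so nothing downstream is affected; but to match the statement exactly you would have to either add the functional monotonicity to the hypotheses of the lemma, or replace the pointwise domination by an argument that exploits only the discrete sequence, which I do not see how to do.
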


\begin{proof}
The proof is very standard and is therefore omitted here.
The proof of the first part can be found in \cite{Portier-Oulidi} for example.
The proof of the second part is a direct adaptation of the first one.
\end{proof}

For all the sequel, let us denote by $\calF_n$ the $\sigma$-algebra of the events 
occurring up to time $n$, ie. $\calF_n = \sigma(X_0, \veps_1, \cdots,\veps_n)$.
\vspace{1ex} 
 
%%%%%%%%%%%%%%%%%%%%% 
% Lemma 1 
%%%%%%%%%%%%%%%%%%%%% 
\begin{lemapp}\label{LemmaG}  
In the context of model (\ref{Model}), assume that [A1]--[A4] hold true.
Let $(U_n)_{n\geq 0}$ be a sequence of random vectors adapted to the filtration $(\calF_n)_{n\geq 0}$.
For any $x\in\bkR^d$ and $n\geq 1$, let us define
$$G_n(x) = \sum_{i=1}^{n} i^b \Big(K\pa{i^a (X_i - U_{i-1} - x)} - i^{- ad}p(x + U_{i-1} - f(X_{i-1})) \Big)$$
where $K$ is a kernel satisfying [A5], $b \in ]0,1[$ and $a \in ]0,1/d[$.
Then, for any constants $A > 0$ and $\nu > 0$, 
\begin{equation}\label{ResGn} 
\sup_{\norm{\,x\,}\leq A n^\nu} \abs{G_n(x)}= o\pa{n^s} + O\pa{n^{1 + b - ad - a}}\hspace{1cm}\mbox{a.s.} 
\end{equation} 
where $s\in](1+ 2b - ad)/2,1[$. 

\noindent
In addition, if $\sum_{i=1}^{n} \|U_{i-1} - f(X_{i-1})\| = o(n)$ a.s. and if $a > 1/(d+2)$, then for any $x\in\bkR^d$,
\begin{equation}\label{ResGnCLT}
n^{-(1 + 2b - ad)/2}\ G_n(x)\  \tenddist\ 
{\mathcal N}\pab{0, (1 + 2b - ad)^{-1}\|K\|_2^2\,p(x)} = G(x)
\end{equation} 
and for two distinct points $x,y$ of $\bkR^d$,
\begin{equation}\label{ResGnMultiCLT}
n^{-(1 + 2b - ad)/2}
\pa{ \begin{array}{l} G_n(x) \\ G_n(y)\\  \end{array}} \ \liml\ 
   \pa{ \begin{array}{l} G(x) \\ G(y)\\ \end{array}} 
\end{equation} 
where $G(x)$ and $G(y)$ are independent.
\end{lemapp}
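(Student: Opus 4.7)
The plan is to decompose $G_n(x)$ into a martingale part plus a deterministic-order bias, then treat each piece separately. Writing $X_i = f(X_{i-1}) + \veps_i$ and setting $\zeta_i = x + U_{i-1} - f(X_{i-1})$, which is $\calF_{i-1}$-measurable, I would first compute
\[
\bkE\cro{K\pab{i^a(\veps_i - \zeta_i)}\,\big|\,\calF_{i-1}} = \int K\pab{i^a(u - \zeta_i)}\,p(u)\,\mbox{\rm d}u = i^{-ad}\,p(\zeta_i) + O\pab{i^{-ad-a}}
\]
via the change of variable $v = i^a(u-\zeta_i)$, the normalization of $K$, and a first-order Taylor expansion of $p$ (using [A4] together with the compact support of $K$ from [A5]). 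This identifies the second term in $G_n$ as the conditional mean up to a uniform bias of order $i^{-ad-a}$, and yields $G_n(x) = M_n(x) + B_n(x)$ with $M_n(x)$ a $(\calF_n)$-martingale and $B_n(x) = O\pab{n^{1+b-ad-a}}$ uniformly in $x$.

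The main technical step is the uniform control of $M_n$. Its predictable quadratic variation satisfies $\CMx_n \leq C\sum_{i=1}^n i^{2b-ad} = O\pab{n^{1+2b-ad}}$ and its increments are bounded by $2 n^b\,\norm{K}_\infty$. For each fixed $x$, Freedman's exponential inequality gives $P\pab{|M_n(x)| > n^s} \leq 2\exp\pab{-c\,n^{2s-1-2b+ad}}$ for some $c>0$, since the conditions $s > (1+2b-ad)/2$ and $s<1$ force both the variance term and the linear term in the Bernstein denominator to be dominated by $n^{2s}$ (the latter requires $s > b$, which follows from $ad<1$). To lift this to a uniform bound over $\acc{\norm{x}\leq An^\nu}$, I would chain: $M_n$ is deterministically Lipschitz in $x$ with constant $O\pab{n^{1+a+b}}$ (because $K$ is Lipschitz by [A5]), so a grid of spacing $\delta_n = n^{s-1-a-b}$ has cardinality $O\pab{n^{d(\nu+1+a+b-s)}}$, polynomial in $n$. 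A union bound together with Borel--Cantelli then yields $\sup_{\norm{x}\leq An^\nu}|M_n(x)| = o(n^s)$ almost surely, since the exponential decay swamps the polynomial grid size; combining with the bias gives (\ref{ResGn}).

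For the pointwise CLT, I would apply the martingale CLT to $M_n(x)/n^{(1+2b-ad)/2}$. The Lindeberg condition is automatic because the normalized increments are bounded by $2\,\norm{K}_\infty n^{(ad-1)/2} \to 0$. For the asymptotic variance I would show
\[
\frac{\CMx_n}{n^{1+2b-ad}} = \frac{\norm{K}_2^2}{n^{1+2b-ad}}\sum_{i=1}^n i^{2b-ad}\,p(\zeta_i) + o(1) \;\longrightarrow\; \frac{\norm{K}_2^2\,p(x)}{1+2b-ad},
\]
using the Lipschitz property of $p$ coming from [A4] and an Abel/Kronecker-type argument that propagates the hypothesis $\sum_{i=1}^n\norm{U_{i-1}-f(X_{i-1})}=o(n)$ to the weighted Cesàro sum $n^{-(1+2b-ad)}\sum i^{2b-ad}\pab{p(\zeta_i) - p(x)}$; the correction coming from the squared conditional mean contributes only $O(n^{-ad})$. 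The normalized bias is $O\pab{n^{(1-ad-2a)/2}}$, which vanishes precisely when $a > 1/(d+2)$, exactly matching the hypothesis.

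The multivariate CLT then reduces, via the Cramér--Wold device and the two-dimensional martingale CLT, to showing that $\CMxy_n / n^{1+2b-ad} \to 0$ for distinct $x,y$. The key observation is that after the same change of variable the joint conditional expectation $\bkE\cro{K(i^a(\veps_i-\zeta_i^x))\,K(i^a(\veps_i-\zeta_i^y))\,|\,\calF_{i-1}}$ involves the integrand $K(v)\,K\pab{v+i^a(x-y)}$, and since $\zeta_i^x - \zeta_i^y = x-y$ is deterministic and $K$ has compact support, this integrand vanishes identically as soon as $i^a\norm{x-y}$ exceeds the diameter of $\mathrm{supp}(K)$. Hence $\CMxy_n$ has only finitely many nonzero terms and is $O(1)$, yielding asymptotic independence of $M_n(x)$ and $M_n(y)$. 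The most delicate step overall is the Cesàro-type argument for variance convergence when $2b-ad<0$, where Abel summation must be applied carefully to transfer the Cesàro hypothesis on $\norm{U_{i-1}-f(X_{i-1})}$ into the weighted average governed by the decreasing weights $i^{2b-ad}$.
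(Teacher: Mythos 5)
Your decomposition $G_n(x)=M_n(x)+B_n(x)$ with $\zeta_i = x + U_{i-1}-f(X_{i-1})$ and the treatment of the bias via the change of variable and a first-order Taylor expansion (giving $O(n^{1+b-ad-a})$) coincide with the paper's. Your CLT argument (martingale CLT plus Lindeberg via bounded normalized increments, variance identification via the Ces\`aro hypothesis, and cancellation of $\langle M(x),M(y)\rangle_n$ through the compact support of $K$ once $i^a\|x-y\|$ exceeds the support diameter) also matches the paper, which follows Bercu and Portier. The genuine divergence is in how the uniform bound $\sup_{\|x\|\leq An^\nu}|M_n(x)|=o(n^s)$ is obtained. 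The paper verifies the four hypotheses of Duflo's Theorem 6.4.34 (a packaged uniform law of the iterated logarithm for vector-indexed martingales), establishing $|\Delta M_n(0)|\leq c_1 n^b$, $\langle M(0)\rangle_n\leq c_2 n^{1+2b-ad}$, a H\"older-in-$x$ bound of order $n^{b+a\delta}\|x-y\|^\delta$ on the increments, and a corresponding bound on the bracket, then letting $\delta\downarrow 0$. You instead use a self-contained Freedman exponential inequality at each grid point, a deterministic Lipschitz bound $|M_n(x)-M_n(y)|=O(n^{1+a+b})\|x-y\|$ to pass to a polynomial-cardinality net, and a union bound plus Borel--Cantelli. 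The two routes reach the same estimate; yours is more elementary and does not need the external uniform-LIL theorem, while the paper's is shorter given the black box. Your pointwise Freedman bound is sound: the deterministic inequality $\langle M(x)\rangle_n\leq \|p\|_\infty\|K\|_2^2\sum_i i^{2b-ad}=O(n^{1+2b-ad})$ makes Freedman directly applicable, and your observation that $s>(1+2b-ad)/2>b$ (hence the linear Bernstein term is also dominated) uses $ad<1$ correctly. One small but real contribution of your write-up: you flag that when $2b-ad<0$ the transfer of $\sum_{i\leq n}\|U_{i-1}-f(X_{i-1})\|=o(n)$ into $\sum_{i\leq n} i^{2b-ad}\|U_{i-1}-f(X_{i-1})\|=o(n^{1+2b-ad})$ requires Abel summation rather than the crude factorization $\leq n^{2b-ad}\sum_i\|\cdot\|$, which only holds when $2b\geq ad$. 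The paper writes the bound in that crude factorized form; it happens to suffice because in all applications of the CLT part ($b=\alpha d$, $a=\alpha$) one has $2b-ad=\alpha d>0$, but your remark is the correct handling at the level of generality of the lemma.
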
 
 
\begin{proof}
For any $y\in\bkR^d$, let us denote $K_i(y) = K(i^a y)$ and let us decompose 
$G_n(x)$ under the form $G_n(x) = M_n(x) + R_n(x)$
where
\begin{eqnarray*} 
M_n(x) &=& \sum_{i=1}^{n} i^b \Bigl(K_i\pa{X_i - U_{i-1} - x}
- \bkE\cro{K_i\pa{X_i - U_{i-1} - x}  | \calF_{i-1}}\Bigr)\\
R_n(x) &=& \sum_{i=1}^{n} i^b \Bigl(\bkE\cro{K_i\pa{X_i - U_{i-1} - x}  | \calF_{i-1}}
- i^{- ad}p(x + U_{i-1} - f(X_{i-1}))  \Bigr)
\end{eqnarray*} 
For any $x\in\bkRd$, $(M_n(x))$ is a square integrable real martingale.
Its increasing process $(\CMx_n)$ is defined by 
\begin{eqnarray}\label{MnCrochet} 
\CMx_n & = & \sum_{i=1}^{n} i^{2b}\bkE\crob{K_i^2\pa{X_i - U_{i-1} - x} | \calF_{i-1}}\nonumber\\
& & \  - \sum_{i=1}^{n} \pab{i^{b} \bkE\cro{K_i\pa{X_i - U_{i-1}  - x} | \calF_{i-1}}}^2 
\end{eqnarray} 
First of all, for all $x\in\bkRd$, set $\Delta M_{n}(x)=M_{n}(x)-M_{n-1}(x)$. 
Since the kernel $K$ is bounded and Lipschitz,  
for all $\delta\in]0\,,\,1[$, one can find some 
positive constant $C_\delta$ such that, for any $x, y \in\bkR^d$ 
\begin{equation}\label{MajK} 
\abs{K(x) - K(y)}  \leq   C_\delta \norm{x - y}^{\delta}. 
\end{equation} 
Now, following  the same development and using the same argues as for the proof 
of Lemma\,A.2 in  Bercu and Portier \cite{Bercu2008},
there are positive constants $c_1,c_2,c_3$ and $c_4$ such that for all $n \geq 1$,
$\abs{\Delta M_n(0)} \leq  c_1 n^b$ and
$<\!M(0)\!>_n \leq c_2 n^{1 +  2b -ad}$, and for any $x, y \in\bkR^d$ and any $\delta\in]0\,,\,1[$,
\begin{eqnarray*} 
\abs{\Delta M_n(x) - \Delta M_n(y)} & \leq & c_3 \norm{x - y}^{\delta}\,n^{b + a\delta}\\
<\! M(x)\!-\!M(y)\!>_n & \leq & c_4  \norm{x - y}^{2\delta} n^{1+2b - ad +2a\delta}. 
\end{eqnarray*} 
Finally, as the power $\delta$ can be chosen as small as one wishes, all the four conditions of 
Theorem\,6.4.34 of \cite{Duflo} are fulfilled.
Consequently, for any constants $A > 0$ and $\nu > 0$, 
\begin{equation} 
\label{ResGn_Mn} 
\sup_{\norm{\,x\,}\leq A n^\nu} \abs{M_n(x)}= o\pa{n^s} \hspace{1cm}\mbox{a.s.} 
\end{equation} 
for any $s\in ](1+ 2b - ad)/2,1[$.

\noindent
Let us now study the remainder $R_n(x)$.
For any $i\geq 1$, we have
$$\bkE\cro{K_i\pa{X_i - U_{i-1} - x}  | \calF_{i-1}}=
\int_{\bkR^d} K\pab{i^a (v + f(X_{i-1}) - U_{i-1} - x)} p(v) \dv,$$
and via the change of variables $t= i^a(v + f(X_{i-1})  - U_{i-1} - x)$, we deduce 
that
\begin{equation*} 
R_n(x) = \sum_{i=1}^{n} i^{b- a d}\int_{\bkR^d} K(t)
\left(p(i^{-a}t + x + U_{i-1} - f(X_{i-1})) -  p(x + U_{i-1} - f(X_{i-1}))\right)\dt.
\end{equation*} 
Then, since by [A4] the gradient of $p$ is bounded, we deduce
by a Taylor expansion that 
\begin{equation}
\label{ResGn_Rn} 
\sup_{x\in\mathbb{R}^d}\abs{R_n(x)} =O\pa{n^{1 + b - ad - a}} \hspace{1cm}\mbox{a.s.} 
\end{equation} 
which, combined with (\ref{ResGn_Mn}) ends the proof of (\ref{ResGn}).

\noindent
Let us now establish (\ref{ResGnCLT}). 
From (\ref{ResGn_Rn}), we immediately deduce that
\begin{equation}
\label{ResGn_RnCLT} 
n^{-(1 + 2b - ad)/2} \sup_{x\in\mathbb{R}^d}\abs{R_n(x)} = o(1) \hspace{1cm}\mbox{a.s.} 
\end{equation} 
as soon as $a > 1/(d+2)$.
Therefore, it remains to establish that
\begin{equation}\label{ResGn_MnCLT}
n^{-(1 + 2b - ad)/2}\ M_n(x)\  \tenddist\ 
{\mathcal N}\pab{0, (1 + 2b - ad)^{-1}\|K\|_2^2 p(x)}
\end{equation} 
In order to make use of the CLT for martingales 
(see e.g. \cite{Duflo}, Corollary 2.1.10, p.46), we have to study
the asymptotic behaviour of $<\!M(x)\!>_n$ and to check 
that Lindeberg's condition is satisfied.
\vspace{1ex}

Starting from (\ref{MnCrochet}), we can rewrite $\CMx_n$ under the form 
\begin{equation}\label{DevMnCrochet} 
\CMx_n = A_n(x) + p(x) \norm{K}_2^2 \sum_{i=1}^{n} i^{2b - ad} + o\pa{n^{1 + 2b - ad}}
\end{equation} 
where
\begin{equation*}
A_n(x) = \sum_{i=1}^{n} i^{2b- a d}\int_{\bkR^d} K^2(t)
\left(p(i^{-a}t + x + U_{i-1} - f(X_{i-1})) -  p(x))\right)\dt
\end{equation*}
Using one more time the fact that the gradient of $p$ is bounded, we deduce that
\begin{equation*}
\abs{A_n(x)} = O(n^{1+ 2b- a d - a}) + n^{2b- a d}\,O\pa{\sum_{i=1}^{n} \norm{U_{i-1} - f(X_{i-1})} }
\end{equation*}
Finally, since $n^{-(1 + 2b - ad)}(1 + 2b - ad)\sum_{i=1}^{n} i^{2b - ad} \tendvers 1$
and we have assumed that  $\sum_{i=1}^{n} \norm{U_{i-1} - f(X_{i-1})} = o(n)$, we obtain that
\begin{equation} 
n^{-(1 + 2b - ad)} \CMx_n\ \tendas\ (1 + 2b - ad)^{-1} \norm{K}_2^2\,p(x)
\end{equation} 
which defines the variance in the CLT.
The Lindeberg's condition is easily obtained following the proof
of Lemma B.1 in \cite{Bercu2008}.

\noindent
To establish (\ref{ResGnMultiCLT}), it is enough to prove
\begin{equation}
n^{-(1 + 2b - ad)/2}
\pa{ \begin{array}{l} M_n(x) \\ M_n(y)\\  \end{array}} \ \liml\ 
   \pa{ \begin{array}{l} G(x) \\ G(y)\\ \end{array}} 
\end{equation} 
which, following the same lines as in \cite{Bercu2008}, 
consists in showing that 
\begin{equation} 
\label{cvgnullMxy} 
\lim_{\ntinf} n^{-(1 + 2b - ad)}\,\sum_{i=1}^{n}  
\mathbb{E}\cro{\Delta M_i(x) \Delta M_i(y)|\calF_{i-1}} = 0 \hspace{1cm}\mbox{a.s.} 
\end{equation} 
For all $i \geq 1$, we have 
\begin{eqnarray*} 
\mathbb{E}\cro{\Delta M_i(x) \Delta M_i(y)|\calF_{i-1}}
& \leq & i^{2b} \bkE\cro{K_i \pa{X_i - U_{i-1} -  x} K_i\pa{X_i - U_{i-1} -  y} | \calF_{i-1}}\\ 
& \leq & 
i^{2b - ad}\int_\bkRd K(t)\, K\pa{t+i^a(x - y)} \times \\
& & \hspace{1.4cm} p\pa{i^{-a} t + x + U_{i-1} - f(X_{i-1})}\,\dt. 
\end{eqnarray*}
Therefore, as the gradient of $p$ is bounded and
$\sum_{i=1}^{n} \norm{U_{i-1} - f(X_{i-1})} = o(n)$, we derive that 
\begin{equation*} 
\sum_{i=1}^{n} \mathbb{E}\cro{\Delta M_i(x) \Delta M_i(y)|\calF_{i-1}} 
\leq 
Q_n(x,y)+O\pab{n^{1 +2b - ad - a}} + o\pab{n^{1+2b-ad}}\hspace{0.5cm}\mbox{a.s.} 
\end{equation*} 
where\ $\displaystyle Q_n(x,y) =\sum_{i=1}^{n} i^{2b-ad} p(x)\int_\bkRd\! K(t)K(t + i^a (x-y))\dt$.  

\noindent
However, using the fact that $K$ is compactly supported, we deduce that for $i$ 
large enough, the integral term in  $Q_n(x,y)$ is zero, which yields  (\ref{cvgnullMxy}). 
\end{proof}

%%%%%%%%%%%%%%%%%%%%%%%
\section*{Appendix B}
%%%%%%%%%%%%%%%%%%%%%%%
\setcounter{section}{2} 
\setcounter{equation}{0} 
\def\Mtild{\widetilde M}
\def\Rtild{\widetilde R}
\def\Gdeuxn{G_{2,n}}
\def\Gtroisn{G_{3,n}}
\def\Gunn{G_{1,n}}

This appendix is concerned with the proof of Theorem\,\ref{cvud.fchap}.
As already mentioned in paragraph 2.3, the uniform convergence over dilated sets of $\fchap$
has already been studied by Portier and Oulidi \cite{Portier-Oulidi} in a controlled framework, ie. for
model of the form $X_{n+1} = f(X_n) + U_n + \varepsilon_{n+1}$. 
We follow the same steps of the proof, adapt and a little bit improve 
the results in the context of model (\ref{Model}).
Starting from the definition of $\fchap (x)$, we can write
\begin{eqnarray}
\label{decomp-fn-f}
\fchap(x) - f(x)\ =\ \frac{\Mtild_n(x) + \Rtild_{n-1}(x)}{H_{n-1}(x)}
\ind_{\acc{H_{n-1}(x)\not = 0}}\ -\ 
f(x)\ind_{\acc{H_{n-1}(x) = 0}}
\end{eqnarray}
\begin{tabular}{ll}
with & $\displaystyle
\Mtild_n(x)~=~\sum_{i=1}^{n-1}\,i^{\beta d}\,K\pab{i^\beta(X_i - x)}\varepsilon_{i+1}$\\
& $\displaystyle
\Rtild_{n-1}(x)~=~\sum_{i=1}^{n-1}\,i^{\beta d}\,K\pab{i^\beta(X_i - x)}
\pab{f(X_i) - f(x)}$\\
& $\displaystyle
H_{n-1}(x)~=~\sum_{i=1}^{n-1}\,i^{\beta d}\,K\pab{i^\beta(X_i - x)}$
\end{tabular}

\noindent
$\bullet$ Although the process $(X_n)$ is not ruled by the same equation, $\Mtild_n(x)$ 
is treated as in \cite{Portier-Oulidi}.
So, we only give the result, that is, for any\ $A < \infty$ and $\nu > 0$,
and for all  $s > 1/2\,+\, \beta d$ with $\beta < 1/2d$,
\begin{eqnarray}\label{resfchap1}
\sup_{\|x\| \leq\,A\,n^\nu} \norm{\Mtild_n(x)}\ \egalas\ o(n^{s}) .
\end{eqnarray}
$\bullet$ Let us now study $\Rtild_n$.
Since by [A3] the gradient of  $f$ is bounded and kernel $K$ has compact support,
there is a positive constant $c$ such that $K(i^\beta (X_i - x)) \| f(X_i) - f(x) \| 
\leq c \, i^{-\beta} K(i^\beta (X_i - x))$. Thus, it follows that
\begin{eqnarray}
\norm{\Rtild_n(x)} & = & O\pa{\sum_{i=1}^{n}\,i^{\beta d - \beta}\,K\pab{i^\beta(X_i - x)}} \label{major-Rntild}\\
& = & O\pa{\Gunn(x) + \sum_{i=1}^{n} \,i^{- \beta} p\pab{x - f(X_{i-1})}} \nonumber
\end{eqnarray}
where $\Gunn(x)$ stands for $G_n(x)$ in Lemma\,\ref{LemmaG}
with $b = \beta d - \beta$, $a = \beta$ and $U_i = 0$.
Now, since $p$ is bounded and using Lemma\,\ref{LemmaG},
we obtain that, for any\ $A < \infty$ and $\nu > 0$, and for all $s^{\prime} > (1 + \beta d - 2\beta)/2$, 
\begin{eqnarray}\label{resfchap2}
\sup_{\|x\| \leq\,A\,n^\nu} \norm{\Rtild_n(x)}\ \egalas\ o(n^{s^{\prime}}) + O(n^{1 - \beta})\ =\ O(n^{1 - \beta}).
\end{eqnarray}
%
% since we can choose $s^\prime = 1 - \beta$.

\noindent
$\bullet$ The term $H_{n-1}(x)$ remains to be studied.
To this aim, let us write $H_n(x)$ in the form $\Gdeuxn(x) + \sum_{i=1}^{n} p\pab{x - f(X_{i-1})}$
where $\Gdeuxn(x)$ stands for $G_n(x)$ in Lemma\,\ref{LemmaG}
with $b = \beta d$, $a = \beta$ and $U_i = 0$.

Let $R$ be such that $\int\norm{f(t)}h(t)\dt < R < \infty$ and let $(v_n)$ be a sequence of positive real
numbers increasing to infinity such that $v_n = O(n^\nu)$ for any $\nu > 0$. 
For $x\in\bkR^d$ such that $\norm{x} \leq v_n$, we have
\begin{eqnarray}\label{ResHn3}
\norm{p}_\infty\ \geq\ \frac{1}{n}\,\sum_{i=1}^{n} p\pab{x - f(X_{i-1})}\ \geq\ \frac{m_n}{n} \sum_{i=0}^{n-1}\ind_{\acc{\norm{f(X_i)} \leq R}}
\end{eqnarray}
which, combined with (\ref{liminf-ball}), yields that 
\begin{eqnarray}\label{Resinfp}
\liminf_{\ntinf} \inf_{\norm{x} \leq v_n} \frac{1}{n m_n}\sum_{i=1}^{n} p\pab{x - f(X_{i-1})} \ > \ 0  \hspace{1cm}\mbox{a.s.}
\end{eqnarray}
In addition, using Lemma\,\ref{LemmaG} applied to $\Gdeuxn(x)$
and the fact that by [A6], $m_n^{-1} = \inf\pa{O(n^{1-\lambda}),o(n^\beta)}$,
we derive that for any positive constants $A$ and $\nu$,
\begin{equation}\label{ResGdeuxn}
\sup_{\norm{\,x\,}\leq A n^\nu} \frac{\abs{\Gdeuxn(x)}}{n m_n} = o(1)\quad\mbox{a.s.}
\end{equation}
Now, since
\begin{equation*}
\inf_{\norm{x} \leq v_n} \frac{H_{n-1}(x)}{n m_n} \  >\
\inf_{\norm{x} \leq v_n} \frac{1}{n m_n}\sum_{i=1}^{n} p\pab{x - f(X_{i-1})}
- \sup_{\norm{\,x\,}\leq A n^\nu} \frac{\abs{\Gdeuxn(x)}}{n m_n}
\end{equation*}
we infer using (\ref{Resinfp}) together with (\ref{ResGdeuxn}), that 
\begin{eqnarray}\label{ResHn}
\liminf_{\ntinf} \frac{1}{n m_n} \inf_{\norm{x} \leq v_n} H_{n-1}(x)\ > \ 0  \hspace{1cm}\mbox{a.s.}
\end{eqnarray}
Finally, from (\ref{decomp-fn-f}), unifying the results obtained for $\Mtild_n(x)$, $\Rtild_{n}(x)$ and $H_n(x)$ 
completes the proof of the first part of Theorem\,\ref{cvud.fchap}.

\noindent
The second part immediately follows by noting that
when $\beta < 1 / 2(d+1)$, we can choose $\lambda\in]1/2 + \beta d\,,\,1[$ 
such that $\lambda = 1 - \beta$.

\begin{remark}{\rm\ From the proof of Theorem\,\ref{cvud.fchap}, 
we deduce an interesting upper bound of the estimation error: for all $x$ in $\bkR^d$ and $n \ge 1$,
\begin{equation}
\label{majfchap}
\| \fchap (x) - f(x) \| \le C_{f,K} + \norm{x} + \varepsilon^{\sharp}_n \quad \mbox{a.s.}
\end{equation}
where $\displaystyle\varepsilon^{\sharp}_n:=\sup_{i \le n} \| \varepsilon_i \|$ and $C_{f,K}$ is a constant
depending on $K$ and $f$. This upper bound will be useful for the study of the average prediction error of $f$ in the proof of 
Corollary \ref{CorErrPred} for example. 
It directly follows from the decomposition (\ref{decomp-fn-f}); 
the constant $c_f$ comes from the study of $\Rtild_{n-1}(x)/H_{n-1}(x)$ and (\ref{major-Rntild}),
the term $\varepsilon^{\sharp}_n$ from the study of $\Mtild_n(x)/H_{n-1}(x)$.
}\end{remark}

%%%%%%%%%%%%%%%%%%%%%
\section*{Appendix C}
%%%%%%%%%%%%%%%%%%%%%%%
\setcounter{section}{3} 
\setcounter{equation}{0} 

This appendix is devoted to proving Theorem\,\ref{th.sur.p} and 
Theorem \ref{TheoCLTpchap}.
Starting from (\ref{defpchap}) we infer that 
for all $y\in\bkRd$ and $n\geq 1$, 
\begin{equation}\label{Decomppchap}
\pchap(y) - p(y)\ =\ \frac{1}{n}\pab{\Gtroisn(y) + B_n(y)}
\end{equation}
where\ $\displaystyle B_n(y)\,=\,\sum_{i=1}^{n} \pab{p(y + \widehat f_{i-1}(X_{i-1}) - f(X_{i-1}))-p(y)}$\ 
and $\Gtroisn$ stands for $G_n$ in Lemma\,\ref{LemmaG} 
with $b = \alpha d$, $a=\alpha$ and $U_i = \widehat f_i(X_i)$.
%
%%%%%%%%%%%%%%%%%%%%%%%%%%
\paragraph{Proof of Theorem \ref{th.sur.p}}
%%%%%%%%%%%%%%%%%%%%%%%%%%

\noindent
$\bullet$ Since the gradient of $p$ is bounded, we deduce that
\begin{eqnarray}\label{MajBn} 
\sup_{y\in\mathbb{R}^d}\abs{B_n(y)} & = & O\left(\sum_{i=0}^{n-1}\norm{\widehat f_i(X_i) - f(X_i)} \right) 
\hspace{1cm}\mbox{a.s.}
\end{eqnarray} 
which, combined with Corollary\,\ref{CorErrPred}, leads to
\begin{equation}\label{MajBnbis}
\frac{1}{n} \sup_{y\in\mathbb{R}^d}\abs{B_n(y)} = O\pa{\frac{n^{-\beta}}{m_n}} + O\pa{w_n}\quad\mbox{a.s.}
\end{equation}
where, in the context of Assumption [A2] with a sequence $(v_n)$ of the form $A n^\eta$ with $A > 0$ and $\eta >0$, 
$w_n = n^{- m\eta + 1 /m}$  if $\eta < 1/m$ and 
$w_n = n^{-\eta(m-1)}$ otherwise. 
As we assumed that $m> 2$, $\eta > 1/m^2$ and $m_n^{-1} = o(n^\beta)$,
we clearly obtain that $\sup_{y\in\mathbb{R}^d}\abs{B_n(y)} = o(n) $ a.s.

In the framework of [A2bis], Corollary\,\ref{CorErrPred} used with $v_n = \eta \log n$ where $\eta >0$,
leads to (\ref{MajBnbis}) with $w_n = (\log n) n^{-a \eta}$ for $a < m$, and thus also implies that
 $\sup_{y\in\mathbb{R}^d}\abs{B_n(y)} = o(n) $ a.s.

\vspace{1ex}

\noindent
$\bullet$ 
Applying Lemma\,\ref{LemmaG} on  $\Gtroisn$, we obtain that for any $A > 0$ and $\nu > 0$,
\begin{equation}\label{Maj-Gtroisn} 
\frac{1}{n}\sup_{\norm{\,y\,}\leq A n^\nu} \abs{\Gtroisn(y)}
= o\pa{n^{\gamma - 1}} + O\pa{n^{- \alpha}} 
\hspace{1cm}\mbox{a.s.} 
\end{equation} 
where $\gamma\in](1+\alpha d)/2,1[$. 
Then, combining (\ref{MajBnbis}) together with (\ref{Maj-Gtroisn}) applied with $A=2$ and $\eta=1/2$, we obtain that almost surely
\begin{equation}\label{Maj-pchap-p} 
\sup_{\norm{\,y\,}\leq 2 \sqrt{n}} \abs{\pchap(y) - p(y)}
= o\pa{n^{\gamma - 1}} + O\pa{n^{- \alpha}} + O\pa{\frac{n^{-\beta}}{m_n}} + O\pa{w_n}
\end{equation} 
To close the proof of Theorem\,\ref{th.sur.p}, let us show that 
\begin{equation} 
\label{Respchap2} 
\sup_{\norm{y} > 2\sqrt{n}} \abs{\pchap(y) - p(y)}
= O\pa{\frac{1}{n}} \hspace{1cm}\mbox{a.s.} 
\end{equation} 
Under [A2] (and so [A2bis]), we know by (\ref{MajSumetSupNormX}) that almost surely 
$\sup_{i\leq n}\norm{X_i} = O\pa{\veps_n^\#} = o\pa{n^{1/m}}$ 
with $m > 2$.
Moreover, thanks to the bound (\ref{majfchap}) on $\fchap(x) - f(x)$, we derive that 
$\sup_{i\leq n}\norm{\widehat f_i(X_i)} = O\pa{1+\veps_n^\# + \sup_{i\leq n}\norm{X_i}}$ a.s.
Thus, it follows that
\begin{equation*} 
\sup_{i\leq n}\norm{X_i - \widehat f_{i-1}(X_{i-1})} = o(n^{1/m}) = o\pa{\sqrt{n}}\hspace{1cm}\mbox{a.s.} 
\end{equation*}
Hence, for $n$ large enough, $\norm{X_i - \widehat f_{i-1}(X_{i-1})} < \sqrt{n}$ a.s.  for any $i \leq n$,  
which assures that, for $y$ such that $\norm{y} > 2\sqrt{n}$,  
$\norm{X_i - \widehat f_{i-1}(X_{i-1}) - y} > \sqrt{n}$ a.s.      
Therefore, since $K$ is compactly supported, it clearly leads to 
\begin{equation} 
\sup_{\|y\| > 2\sqrt{n}}\abs{\sum_{i=1}^{n} i^{\alpha d}
K\pa{i^\alpha(X_i - \widehat f_{i-1}(X_{i-1}) - y)}} = O\pa{1}\quad\mbox{a.s.}  
\end{equation} 
and
\begin{equation} 
\label{ResConvUnif3} 
\sup_{\normp{y} > 2\sqrt{n}}\abs{\pchap(y)} = O\pa{\frac{1}{n}}\quad\mbox{a.s.}  
\end{equation} 
In addition, since $(\veps_n)$ has a finite moment of order $m>2$ and $p$ is positive, 
it follows that $p(y) = O(\norm{y}^{-3})$ for large values of $y$, leading to  
\begin{equation}\label{ResConvUnif4} 
\sup_{\normp{y} > 2\sqrt{n}} p(y) = O\pab{\frac{1}{n}}. 
\end{equation} 
Consequently, (\ref{Respchap2}) is deduced 
from (\ref{ResConvUnif3}) and (\ref{ResConvUnif4}),
which achieves the proof of Theorem \ref{th.sur.p}.

%%%%%%%%%%%%%%%%%%%%%%%%%%
\paragraph{Proof of Theorem \ref{TheoCLTpchap}}
%%%%%%%%%%%%%%%%%%%%%%%%%%
From the decomposition (\ref{Decomppchap}), we deduce that for any $y\in\bkR^d$,
\begin{equation}
\sqrt{n^{1-\alpha d}}\,\pa{\pchap(y) - p(y)}\ =\ n^{-(1 + \alpha d)/2}\, \Gtroisn(y) + n^{-(1 + \alpha d)/2} B_n(y)
\end{equation}
Thanks to the second part of Lemma\,\ref{LemmaG}, we derive that
for any $\alpha \in ]\frac{1}{d+2}\,,\,\frac{1}{d}[$, 
\begin{equation} 
n^{-(1 + \alpha d)/2}\, \Gtroisn(y)  \tenddist 
{\mathcal N}\pab{0, (1 + \alpha d)^{-1}\|K\|_2^2\, p(y)}
\end{equation} 
Therefore, to establish the pointwise CLT for $\pchap$,
we only have to prove that 
\begin{equation}\label{CondBnpourCLT} 
n^{-(1 + \alpha d)/2}\, \abs{B_n(y)} = o(1) \quad\quad\mbox{a.s.}
\end{equation} 
Using (\ref{MajBnbis}) and the fact that $\eta < 1/m$, condition (\ref{CondBnpourCLT}) reduces to (\ref{CondmnTLC})
%$n^{(1 - \alpha d)/2}\,n^{- \beta}\, m_n^{-1} = o(1)$
%
and, in the context of [A2], 
\begin{equation}\label{CondsurBn2} 
n^{(1 - \alpha d)/2}\, n^{- m\eta + 1 /m} = o(1)
\end{equation} 
With a value of $\eta$ chosen in $]1/m^2\,,\,1/m^2 + 1/m(d+2)[$,
condition (\ref{CondsurBn2}) is fulfilled as soon as the bandwidth parameter $\alpha$ 
satisfies $\alpha d > 1 - 2(m\eta - 1/m)$.

\noindent
In the framework of [A2bis], condition (\ref{CondsurBn2}) is replaced by
$n^{(1 - \alpha d)/2}\, n^{- a\eta} \log n= o(1)$ with $a < m$.
Therefore, with a value of $\eta$ chosen in $]0\,,\, 1/2a[$, 
it is fulfilled as soon as $\alpha d > 1 - 2 a\eta$,
which ends the proof of the first part of Theorem \ref{TheoCLTpchap}. 

\noindent
The multivariate CLT remains to be proven. 
Taking the previous results into account, it is enough to establish that for  
two distinct points $x,y \in \bkR^d$,
\begin{eqnarray*} 
   n^{-(1+\alpha d)/2}\left( \begin{array}{ll} 
    \Gtroisn(x) \\ 
    \Gtroisn(y) 
   \end{array} \nonumber \right)\ \liml\ 
{\cal N}\pa{0\,,\,\frac{\|K\|_2^2}{1 + \alpha d}
\pa{\begin{array}{cc} 
    p(x) & 0 \\ 
    0 & p(y) \\
\end{array}}  }
\end{eqnarray*} 
This result is straightforwardly given by Lemma \ref{LemmaG},
which closes the proof of Theorem \ref{TheoCLTpchap}.\ $\emptysq$

%%%%%%%%%%%%%%%%%%%%%%%%%%%%%%%% 


\begin{thebibliography}{10} 
%%%%%%%%%%%%%%%%%%%%%%%%%%%%%%%% 
\bibitem{Ahmad} {\sc I.A. Ahmad},
{\em Residuals density estimation in nonparametric regression}, Stat. \& Prob. Letters, 14, (1992), pp.133-139.

\bibitem{Bachmann-Dette} {\sc D. Bachmann and H. Dette},
{\em A note on the Bickel-Rosenblatt test in autoregressive time series},
Stat. \& Prob. Letters, 74, (2005), pp. 221-234. 

%\bibitem{Basseville} {\sc M. Basseville and I.V. Nikiforov},
%{\it Detection of abrupt changes - Theory and application}, 
%Prentice Hall, 1993.

\bibitem{Bercu2008}
{\sc Bercu, B. and Portier, B},
{\em Kernel density estimation and goodness-of-fit test in adaptive tracking},
SIAM J. of Cont. and Opt., 47, 5, (2008), pp. 2440-2457.

\bibitem{Bickel} {\sc P. Bickel and P. Rosenblatt},
{\it On some global measures of the deviation of density function estimators},
Ann. Statist. 1, (1973), pp. 1071-1095.

\bibitem{Bosq} {\sc Bosq, D}, 
{\it Nonparametric statistics for stochastic processes,
estimation and prediction},
Lecture Notes in Statistics, 110, Springer Verlag, 1996.

%\bibitem{Brockwell} {\sc P.J. Brockwell and R.A. Davies},
%{\it Time Series : Theory and Methods}, 2nd ed., 
%Series in Statistics, Springer, New York, 1991.

\bibitem{Chai-Li-Tian} {\sc G.X. Chai, Z.Y. Li and H. Tian}
{\it  Consistent nonparametric estimation of error distributions in linear model},
Acta Mathematicae Applicatae Sinica, 7, (1991), 245-256. 

\bibitem{Cheng04} {\sc F. Cheng},
{\em Weak and strong uniform consistency of a kernel error density estimator in nonparametric regression},
J. Statis. Planning \& Inference, 119, (2004), 95-107.

\bibitem{Cheng05} {\sc F. Cheng},
{\em Asymptotic distributions of error density estimators in first-order autoregressive models}, 
Sankhya, 67, 3, (2005), 553-567.

\bibitem{Cheng2010} {\sc F. Cheng},
{\em Global property of error density estimation in nonlinear autoregressive time series models}, 
Stat. Inf. for Stoch. Proc. (2010), {\it to appear}.

\bibitem{Devroye1} {\sc L. Devroye},
{\it A Course in Density Estimation}, Birkh\"auser, Boston, 1987.

\bibitem{Devroye2} {\sc L. Devroye and G. Lugosi},
{\it Combinatorial Methods in Density Estimation},
Springer Verlag, New York, 2001.

\bibitem{Duflo} {\sc M. Duflo},
{\it Random Iterative Models}, Springer Verlag, Berlin, 1997.

\bibitem{Efromovich} {\sc S. Efromovich},
{\it Estimation of the density of regression errors}, The Ann. of Stat., 33, 5, (2005), pp. 2194-2227.

\bibitem{Koul92} {\sc H.L. Koul},
{\it Weighted empiricals and linear models}, In: Lecture Notes-Monograph Series, 21, Institute of Mathematical
Statistics, Hayward, California, 1992.

\bibitem{Lee-Na} {\sc S. Lee and S. Na},
{\em On the Bickel-Rosenblatt test for first-order autoregressive models},
Stat. \& Prob. Letters, 56, 1, (2002), pp. 23-35. 

\bibitem{Liebscher} {\sc E. Liebscher}, 
{\em Estimating the Density of the Residuals in Autoregressive Models},
Stat. Inf. for Stoch. Proc., 2, (1999), pp.105-117.

\bibitem{Muller.etal} {\sc U.U. M\"uller, A. Schick and W. Wefelmeyer},
{\em Weighted residuals-based density estimators for nonlinear autoregressive models},
Statistica Sinica, 15, (2005), pp.177-195.

\bibitem{Nadaraya}
{\sc E. Nadaraya}, {\em On estimating regression}, Theory of Probability and its Applications, 35, (1964), 141-142.

\bibitem{Parzen}
{\sc E. Parzen}, {\em On estimation of a probability density function and mode}, 
Ann. Math. Stat., 33, 3, (1962), 1065-1076.

\bibitem{Plancade} {\sc S. Plancade}, 
{\em Estimation of the density of regression errors by pointwise model selection}, 
Math. Methods of Stat., 18 (4), (2009), pp.341-374. 

\bibitem{Portier-Oulidi} {\sc B. Portier and A. Oulidi}, {\em Nonparametric estimation and adaptive control of functional autoregressive models}, 
SIAM J. Control Opt., 39, 2, (2000), pp. 411-432. 

\bibitem{Rosenblatt} {\sc M. Rosenblatt},
{\em Remarks on some nonparametric estimates of a density function},
Ann. Math. Stat., 1, (1956), pp. 4-27.

\bibitem{Senoussi} {\sc R. Senoussi},
{\em Uniform iterated logarithm for martingales and their application
to functional estimation in controlled Markov chains},
Stoch. Proc.  Appl., 89, (2000), pp. 193-211.

\bibitem{Silverman}{\sc B. W. Silverman}
{\it Density estimation for statistics and data analysis}, 
Chapman \& Hall, New York, 1986.	

\bibitem{Watson}
{\sc G. Watson}, {\em Smooth regression analysis}, Sankhya: The Indian Journal of Statistics, Series A, 26, (1964), 359-372.
\end{thebibliography}
\end{document}